\renewcommand*{\verbatim@font}{\sffamily}
\title{Simple Concurrent Connected Components Algorithms\thanks{This paper is a revised and expanded version of \cite{liu_tarjan}. Research at Princeton University partially supported by an innovation research grant from Princeton and a gift from Microsoft. Some work was done at AlgoPARC workshops in 2017 and 2019, partially supported by NSF grants CCF-1930579 and 1745331, respectively.}}
\author{S. Cliff Liu\thanks{Department of Computer Science, Princeton University; sixuel@cs.princeton.edu.}
}
\author{Robert E. Tarjan\thanks{Department of Computer Science, Princeton University, and Intertrust Technologies; ret@cs.princeton.edu.}
}
\affil{}
\begin{document}

\maketitle

\newcounter{dummy}
\newtheorem{lemma}[dummy]{Lemma}
\newtheorem{definition}[dummy]{Definition}
\newtheorem{remark}[dummy]{Remark}
\newtheorem{theorem}[dummy]{Theorem}
\newtheorem{corollary}[dummy]{Corollary}
\newtheorem{observation}[dummy]{Observation}
\newtheorem{claim}{Claim}

\begin{abstract}
   We study a class of simple algorithms for concurrently computing the connected components of an $n$-vertex, $m$-edge graph. Our algorithms are easy to implement in either the COMBINING CRCW PRAM or the MPC computing model. For two related algorithms in this class, we obtain $\Theta(\lg n)$ step and $\Theta(m \lg n)$ work bounds.\footnote{We denote by $\lg$ the base-two logarithm.}
   For two others, we obtain $O(\lg^2 n)$ step and $O(m \lg^2 n)$ work bounds, which are tight for one of them. All our algorithms are simpler than related algorithms in the literature. We also point out some gaps and errors in the analysis of previous algorithms. Our results show that even a basic problem like connected components still has secrets to reveal.
\end{abstract}

\section{Introduction}\label{intro}

The problem of finding the connected components of an undirected graph with $n$ vertices and $m$ edges is fundamental in algorithmic graph theory.
Any kind of graph search, such as depth-first or breadth-first, solves it in linear time sequentially, which is best possible.
The problem becomes more interesting in a concurrent model of computation.
In the heyday of the theoretical study of PRAM (parallel random-access machine) algorithms, many more-and-more efficient algorithms for the problem were discovered, culmina-ting with the $O(\lg n)$ step, $O(m)$ work randomized algorithms of Halperin and Zwick \cite{DBLP:journals/jcss/HalperinZ96, halperin2001optimal}, the second of which computes spanning trees of the components.
The goal of most of the work in the PRAM model was to obtain the best asymptotic bounds, not the simplest algorithms.

With the growth of the internet, the world-wide web, and cloud computing, finding connected components on huge graphs has become commercially important, and practitioners have put versions of the PRAM algorithms into use. Many of the PRAM algorithms are complicated, and even some of the simpler ones have been further simplified when implemented.
Experiments suggest that the resulting algorithms perform\\
well in practice, but some of the published claims about their theoretical performance are incorrect or unjustified.

Given this situation, our goal here is to develop and analyze the simplest possible efficient algorithms for the problem and to rigorously analyze their efficiency. In exchange for algorithmic simplicity, we are willing to allow analytic complexity.
Our algorithms are easy to implement in either the COMBINING CRCW PRAM model \cite{DBLP:books/daglib} in which write conflicts are resolved in favor of the smallest value, or in the MPC (massive parallel computing) model \cite{DBLP:journals/jacm/BeameKS17}.

The COMBINING CRCW PRAM model is stronger than the more standard ARBITRARY CRCW PRAM model, in which write conflicts are resolved arbitrarily (one of the writes succeeds, but the algorithm has no control over which one), but is weaker than the MPC model.

We consider a class of simple deterministic algorithms for the problem. We study in detail four algorithms in the class, all of which are simpler than corresponding existing algorithms. For two we prove a step bound of $\Theta(\lg n)$ and a work bound of $\Theta(m \lg n)$. For the other two, we prove a step bound of $O(\lg^2 n)$ and a work bound of $O(m \lg^2 n)$. These bounds are tight for one of the two. We also show that one of our $O(\lg^2 n)$-step algorithms takes $O(d)$ steps where $d$ is the largest diameter of a component, but the others do not.

Our paper contains five sections in addition to this introduction.
\S{\ref{af_section}} presents our algorithmic framework and a general lower bound in the MPC model.
\S{\ref{alg_section}} presents our algorithms.
\S{\ref{sec_analysis}} proves upper and lower step bounds.
\S{\ref{rw_section}} discusses related work.
\S{\ref{remark_section}} contains final remarks and open problems.

\section{Algorithmic Framework}\label{af_section}

Given an undirected graph with vertex set $[n] = \{1, 2, ..., n\}$ and $m$ edges, we wish to compute its connected components via a concurrent algorithm.  More precisely, for each component we want to label all its vertices with a unique vertex in the component, so that two vertices are in the same component if and only if they have the same label.
To state bounds simply, we assume $n > 2$ and $m > 0$.
We denote an edge by the unordered pair of its ends.

As our computing model we use a COMBINING CRCW (concurrent read, concurrent write) PRAM (parallel random-access machine) \cite{DBLP:books/daglib}. Such a machine consists of a large common memory and a number of processes, each with a small private memory. In one step, each process can do one unit of local computation, read one word of common memory, or write into one word of common memory.  The processes operate in lockstep. Concurrent reads and writes are allowed, with write conflicts resolved in favor of the smallest value written. We discuss weaker variants of the PRAM model in \S{\ref{rw_section}}. We measure the efficiency of an algorithm primarily by the number of concurrent steps and secondarily by the \emph{work}, defined to be the number of steps times the number of processes.

Our algorithms are also easy to implement on the MPC (massively parallel computing) model \cite{DBLP:journals/jacm/BeameKS17}. This is a model of distributed computation based on the BSP (bulk synchronous parallel) model \cite{DBLP:journals/cacm/Valiant90}. The MPC model is more powerful than our PRAM model but is a realistic model of cloud computing platforms. An MPC machine consists of a number of processes, each with a private memory. There is no common global memory; the processes communicate with each other by sending messages. Computation proceeds in globally synchronized steps. In one step, each process receives the messages sent to it in the previous step, does some amount of local computation, and then sends a message or messages to one or more other processes.

We specialize the MPC model to the connected components problem as follows. There is one process per edge and one per vertex. A process can only send a message to another process once it knows about that process. Initially a vertex knows only about itself, and an edge knows only about its two ends. Thus in the first concurrent step only edges can send messages, and only to their ends. A vertex or edge knows about another vertex or edge once it has received a message containing the vertex or edge. This model ignores contention resulting from many messages being sent to the same process, and it allows a process to send many messages in one step.

The MPC model is quite powerful, but even in this model there is a non-constant lower bound on the number of steps needed to compute connected components.

\begin{theorem}\label{lower_bound}
    Computing connected components in the MPC model takes $\Omega(\lg d)$ steps, where $d$ is the largest diameter of a component.
\end{theorem}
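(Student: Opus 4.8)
The plan is to combine a \emph{locality lemma} with an indistinguishability (adversary) argument. Call the contents of the local memory of a vertex or edge $x$ after $t$ concurrent steps its \emph{state}, and note that whatever label $x$ eventually outputs is a function of its final state. The locality lemma I want is: there is an absolute constant $c$ such that for every $t\ge 0$ the state of $x$ after $t$ steps is a function only of the ball $B(x,c\cdot 2^{t})$, meaning the subgraph induced by the vertices within distance $c\cdot 2^{t}$ of $x$ together with their incident edges, with vertices retaining their identities (for an edge, distances are measured from an endpoint, which costs only an additive constant).

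I would prove this by induction on $t$, simultaneously maintaining the auxiliary invariant that after $t$ steps $x$ only knows about vertices and edges lying inside $B(x,c\cdot 2^{t})$. The base case $t=0$ is immediate, since $x$ knows only itself, resp.\ its two ends. For the inductive step, in step $t+1$ the object $x$ updates its state using only the messages it receives, and it receives a message from $y$ only if $y$ already knows about $x$; by the invariant at time $t$ this forces $y$ to lie within distance $O(2^{t})$ of $x$. The message $y$ sends is some function of $y$'s state at time $t$, hence by the inductive hypothesis a function of $B(y,c\cdot 2^{t})$, which is contained in $B(x,2c\cdot 2^{t}+O(1))$; the same containment bounds the set of objects $x$ can newly come to know about. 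Thus the horizon $\rho(t)$ obeys $\rho(0)=O(1)$ and $\rho(t+1)\le 2\rho(t)+O(1)$, so $\rho(t)=O(2^{t})$, proving the lemma.

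With the lemma in hand, fix $d$ and take the path $G$ on vertices $0,1,\dots,d$ (one component, diameter $d$) and the graph $G'$ obtained by deleting the single edge $\{k,k+1\}$, $k=\lfloor d/2\rfloor$ (two components, each of diameter $<d$). The deletion is far from the two endpoints: for every $r\le \lfloor d/2\rfloor-2$ the balls $B(0,r)$ and $B(d,r)$ are identical in $G$ and in $G'$ --- they are just path segments not reaching the altered edge --- and, crucially, the model gives a vertex no global information initially, in particular not $m$, which is the only parameter distinguishing $G$ from $G'$. Hence if the algorithm halts in $t$ steps with $c\cdot 2^{t}\le \lfloor d/2\rfloor-2$, then by the locality lemma vertices $0$ and $d$ have exactly the same states in $G$ as in $G'$, so they output the same labels in both. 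But correctness demands that $0$ and $d$ get the same label in $G$ and different labels in $G'$ --- a contradiction. Therefore $c\cdot 2^{t}>\lfloor d/2\rfloor-2$, i.e.\ $t=\lg d-O(1)=\Omega(\lg d)$; the same argument covers a randomized algorithm after coupling the coin tosses of corresponding vertices and edges.

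I expect the main obstacle to be the locality lemma, and specifically getting the growth rate right: because a vertex may in one step message \emph{any} vertex it already knows about, not merely its graph-neighbors, information propagates at an exponential rather than a linear rate, so the familiar ``radius-$t$ view after $t$ rounds'' fact must be replaced by a ``radius-$\Theta(2^{t})$ view,'' and one must verify the recursion $\rho(t+1)\le 2\rho(t)+O(1)$ carefully while treating edges and vertices uniformly. Once that is in place the indistinguishability step is routine bookkeeping.
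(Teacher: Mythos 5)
Your proof is correct. Its technical core --- the induction showing that after $t$ steps an object's state (and hence everything it knows about) is determined by a ball of radius $O(2^{t})$ around it, with the horizon recursion $\rho(t+1)\le 2\rho(t)+O(1)$ --- is exactly the induction the paper invokes in one sentence (``after $k$ steps a vertex has only received messages containing vertices within distance $2^k$''). Where you diverge is the endgame. The paper exploits the output convention of its framework (the component label must be a \emph{vertex of the component}): pick the eventual label $u$, note some $v$ is at distance at least $d/2$ from it, and conclude $v$ cannot have learned $u$'s name in fewer than $\lg d - 1$ steps. You instead run an indistinguishability argument on a path versus the same path with its middle edge deleted, concluding that the endpoints cannot simultaneously output equal labels on one input and unequal labels on the other. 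Your route costs an extra page but buys genuine generality: it does not require the output label to be a member of the component (only that label equality determine the partition), and, as you note, it survives randomization by coupling coin tosses --- neither of which the paper's one-line conclusion gives. One small bookkeeping point in your favor that is worth keeping explicit: since $G$ and $G'$ have different edge counts, your observation that no object initially knows $m$ (or anything global) is what makes the two executions couplable at all, and your radius budget $\lfloor d/2\rfloor-2$ correctly leaves room for the ``incident edges of ball vertices'' in your ball definition. The only caveat is that your hard instance for the $\Omega(\lg d)$ bound may end up being $G'$ rather than $G$, whose maximum component diameter is about $d/2$; this loses only an additive constant in the logarithm, so the stated bound is unaffected.
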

\begin{proof}
    Let $u$ be the vertex that eventually becomes the label of all vertices in a component of diameter $d$.
    Some vertex $v$ is at distance at least $d/2$ from $u$. 
    An induction on the number of steps shows that after $k$ steps a vertex has only received messages containing vertices within distance $2^{k-1}$. Since $v$ must receive a message containing $u$, the computation takes at least $\lg d$ steps.
\end{proof}

It is easy to solve the problem in $O(\lg d)$ steps if messages can be arbitrarily large:
send each edge end to the other end, and then repeatedly send from each vertex all the vertices it knows to all its incident vertices \cite{DBLP:conf/icde/RastogiMCS13}.
If there is a large component, however, this algorithm is not practical for at least two reasons: it requires huge memory at each vertex, and the number of messages sent in the last step can be quadratic in $n$.
Hence we restrict the local memory of a vertex or edge to hold only a small constant number of vertices and edges.
We also restrict messages to hold only a small constant number of vertices and edges, along with an indication of the message type, such as a label request or a label update.
Our goal is a simple algorithm with a step bound of $O(\lg n)$.
(We discuss the harder goal of achieving a step bound of $O(\lg d)$ in \S{\ref{rw_section}}.)

We consider algorithms that maintain a label for each vertex $u$, initially $u$ itself. The algorithm updates labels step-by-step until none changes, after which all vertices in a component have the same label, which is one of the vertices in the component. At any given time the current labels define a digraph (directed graph) of out-degree one whose arcs lead from vertices to their labels.
We call this the \emph{label digraph}.
If these arcs form no cycles other than loops (arcs of the form $(u, u)$), then this digraph is a forest of trees rooted at the self-labeled vertices: the parent of $u$ is its label unless this label is $u$, in which case $u$ is a root. We call this the \emph{label forest}.
Each tree in the forest is a \emph{label tree}.

All our algorithms maintain the label digraph as a forest; that is, they maintain acyclicity except for self-labels.  (We know of only two previous algorithms that do not maintain the label digraph as a forest: see \S{\ref{rw_section}}.)  Henceforth we call the label of a vertex $u$ its \emph{parent} and denote it by $u.p$, and we call a label tree just a \emph{tree}.
A non-root vertex is a \emph{child} of its parent.
A vertex is a \emph{leaf} if it is a child but it has no children of its own.
A tree is \emph{flat} if the root is the parent of every child in the tree, and is a \emph{singleton} if the root is the only vertex. (Some authors call a flat tree a \emph{star}.)
The \emph{depth} of a tree vertex $x$ is the number of arcs on the path from $x$ to the root of the tree: a root has depth zero, a child of a root has depth one.
The \emph{depth of a tree} is the maximum of the depths of its vertices.

When changing labels, a simple way to guarantee acyclicity is to replace a parent only by a smaller vertex.  We call this \emph{minimum labeling}. (An equivalent alternative, \emph{maximum labeling}, is to replace a parent only by a larger vertex).
A minimum labeling algorithm stops with each vertex labeled by the smallest vertex in its component. All our algorithms do minimum labeling. They also maintain the invariant that all vertices in a tree are in the same component, as do all algorithms known to us. That is, they never create a tree containing vertices from two or more components. At the end of the computation there is one flat tree per component, whose root is the minimum vertex in the component.

\section{Algorithms}\label{alg_section}

We consider algorithms that consist of initialization followed by a main loop that updates parents and repeats until no parent changes.
Initialization consists of setting the parent of each vertex equal to itself. 
The following pseudocode does initialization: \\

\emph{initialize}: \\
\indent\indent \textbf{for} each vertex $v$ \textbf{do} $v.p = v$ \\

Since initialization is the same for all our algorithms, we omit it in the descriptions below and focus on the main loop. Each iteration of the loop does a \emph{connect} step, which updates parents using current edges, one or more \emph{shortcut} steps, each of which updates parents using old parents, and possibly an \emph{alter} step, which alters edges.

Our algorithms maintain the following \emph{connectivity invariant}: $v$ and $v.p$ are in the same component, as are $v$ and $w$ if $\{v, w\}$ is an edge.
(The latter statement is trivial without edge alteration; edge alteration maintains it.)
If $\{v, w\}$ is an edge, replacing the parent of $v$ by any ancestor of $w$ preserves the invariant, as does replacing the parent of $w$ by any ancestor of $v$.
This gives us many ways of doing a connect step. We focus on two.
The first is \emph{direct-connect}, which for each edge $\{v, w\}$, uses the minimum of $v$ and $w$ as a candidate for the new parent of the other.
The other is \emph{parent-connect}, which uses the minimum of the old parents of $v$ and $w$ as a candidate for the new parent of the old parent of the other.
We express these methods in pseudocode below. We have written all our pseudocode so that it is correct and produces
unambiguous results even if the loops run sequentially rather than concurrently and the vertices
and edges are processed in arbitrary order. We say more about this issue below. \\

\emph{direct-connect}: \\
\indent\indent \textbf{for} each edge $\{v, w\}$ \textbf{do} \\
\indent\indent\indent \textbf{if} $v > w$ \textbf{then} \\
\indent\indent\indent\indent $v.p = \min\{v.p, w\}$ \\
\indent\indent\indent \textbf{else} $w.p = \min\{w.p, v\}$ \\

The pseudocode for \emph{parent-connect} begins by computing $v.o$, the old parent of $v$, for each vertex
$v$.
It then uses these old parents to compute the new parent $v.p$ of each vertex $v$.
The new parent of a vertex $x$ is the minimum $w.o < x.o$ such that there is an edge $\{v, w\}$ with $v.o = x$, if there is
such an edge; if not, the parent of $x$ does not change. \\

\emph{parent-connect}: \\
\indent\indent \textbf{for} each vertex $v$ \textbf{do} \\
\indent\indent\indent $v.o = v.p$ \\
\indent\indent \textbf{for} each edge $\{v, w\}$ \textbf{do} \\
\indent\indent\indent \textbf{if} $v.o > w.o$ \textbf{then} \\
\indent\indent\indent\indent $v.o.p = \min\{v.o.p, w.o\}$ \\
\indent\indent\indent \textbf{else} $w.o.p = \min\{w.o.p, v.o\}$ \\

If all reads and comparisons occur before all writes, and all writes occur concurrently, the following simpler pseudocode has the same semantics as that for \emph{parent-connect}: \\

\textbf{for} each edge $\{v, w\}$ \textbf{do} \\
\indent\indent \textbf{if} $v.p > w.p$ \textbf{then} \\
\indent\indent\indent $v.p.p = \min\{v.p.p, w.p\}$ \\
\indent\indent \textbf{else} $w.p.p = \min\{w.p.p, v.p\}$ \\

On the other hand, if this simpler loop is executed sequentially, then in general the results depend on the order in which the edges are processed. Suppose for example there are two edges $\{x, y\}$ and $\{v, w\}$ such that $x.p = v$ and $v.p = z$.
In \emph{parent-connect}, $w.p$ is a candidate to be the new
parent of $z$.
That is, after the connect, the new parent of $z$ will be no greater than the old parent of $w$.
But in the simpler loop, if $\{x, y\}$ is processed before $\{v, w\}$, the processing of $\{x, y\}$ might change the parent of $v$ to a vertex other than $z$, thereby making $w.p$ no longer a candidate for the new parent of $z$. Even though we are primarily interested in global concurrency, we want our
algorithms and bounds to be correct in the more realistic setting in which the edges are processed one group at a time, with the group size determined by the number of available processes.

On a COMBINING PRAM, we can use the simple loop for \emph{parent-connect} using the simpler loop, since there is global concurrency.
Each process for an edge $\{v, w\}$ reads $v.p$ and $w.p$. If $v.p > w.p$ it reads $v.p.p$, tests if $w.p < v.p.p$, and if so writes $w.p$ to $v.p.p$; if $v.p \le w.p$ it reads $w.p.p$, tests if $v.p < w.p.p$, and if so writes $v.p$ to $w.p.p$. All the reads occur before all the writes, and all the writes
occur concurrently, with a write of smallest value succeeding if there is a conflict.

In the MPC model, each vertex stores its parent. To execute \emph{parent-connect}, each process for an edge $\{v, w\}$ requests $v.p$ and $w.p$.
If $v.p > w.p$ it sends $w.p$ to $v.p$; otherwise, it sends $v.p$ to $w.p$.
Each vertex then updates its parent to be the minimum of its old value and the smallest of the received values.  All our other loops can be similarly implemented on a COMBINING CRCW PRAM or in the MPC model.

A connect step of either kind can move a subtree from one tree to another.
We can prevent this by restricting connection so that it only updates parents of roots.
The following pseudocode implements such restrictions of \emph{direct-connect} and \emph{parent-connect}, which we call \emph{direct-root-connect} and \emph{parent-root-connect}, respectively: \\

\emph{direct-root-connect}: \\
\indent\indent \textbf{for} each vertex $v$ \textbf{do} \\
\indent\indent\indent $v.o = v.p$ \\
\indent\indent \textbf{for} each edge $\{v, w\}$ \textbf{do} \\
\indent\indent\indent \textbf{if} $v > w$ and $v = v.o$ \textbf{then} \\
\indent\indent\indent\indent $v.p = \min\{v.p, w\}$ \\
\indent\indent\indent \textbf{else if} $w = w.o$ \textbf{then} \\
\indent\indent\indent\indent $w.p = \min\{w.p, v\}$ \\

\emph{parent-root-connect}: \\
\indent\indent \textbf{for} each vertex $v$ \textbf{do} \\
\indent\indent\indent $v.o = v.p$ \\
\indent\indent\indent \textbf{for} each edge $\{v, w\}$ \textbf{do} \\
\indent\indent\indent\indent \textbf{if} $v.o > w.o$ and $v.o = v.o.o$ \textbf{then} \\
\indent\indent\indent\indent\indent $v.o.p = \min\{v.o.p, w.o\}$ \\
\indent\indent\indent\indent \textbf{else if} $w.o = w.o.o$ \textbf{then} \\
\indent\indent\indent\indent\indent $w.o.p = \min\{w.o.p, v.o\}$ \\

In \emph{direct-root-connect} (as in \emph{parent-connect}) we need to save the old parents to get a correct sequential implementation, so that the root test is correct even if the parent has been changed by processing another edge during the same iteration of the loop over the edges.
If we truly have global concurrency, simpler pseudocode suffices, as for \emph{parent-connect}.

Shortcutting replaces the parent of each vertex by its grandparent. The following pseudocode implements shortcutting: \\

\emph{shortcut}: \\
\indent\indent \textbf{for} each vertex $v$ \textbf{do} \\
\indent\indent\indent $v.o = v.p$ \\
\indent\indent \textbf{for} each vertex $v$ \textbf{do} \\
\indent\indent\indent $v.p = v.o.o$  \\

In the case of \emph{shortcut}, the simpler loop
``for each vertex $v$ do $v.p =v.p.p$''
produces correct results and preserves our time bounds, even though sequential execution of the code produces different results depending on the order in which the vertices are processed.
All we need is that the new parent of a vertex is no greater than its old grandparent. Thus the simpler loop might well be a better choice in practice.

Edge alteration deletes each edge $\{v, w\}$ and replaces it by $\{v.p, w.p\}$ if $v.p \neq w.p$. The following pseudocode implements alteration: \\

\emph{alter}: \\
\indent\indent \textbf{for} each edge $\{v, w\}$ \textbf{do} \\
\indent\indent\indent \textbf{if} $v.p = w.p$ \textbf{then} \\
\indent\indent\indent\indent delete $\{v, w\}$ \\
\indent\indent\indent \textbf{else} replace $\{v, w\}$ by $\{v.p, w.p\}$ \\

We shall study in detail four algorithms, whose main loops are given below:

\begin{itemize}

    \item Algorithm \textsf{S}: \textbf{repeat} \{\emph{parent-connect}; \textbf{repeat} \emph{shortcut} \textbf{until} no $v.p$ changes\} \textbf{until} no $v.p$ changes

    \item Algorithm \textsf{A}: \textbf{repeat} \{\emph{direct-connect}; \emph{shortcut}; \emph{alter}\} \textbf{until} no $v.p$ changes

    \item Algorithm \textsf{R}: \textbf{repeat} \{\emph{parent-root-connect}; \emph{shortcut}\} \textbf{until} no $v.p$ changes

    \item Algorithm \textsf{RA}: \textbf{repeat} \{\emph{direct-root-connect}; \emph{shortcut}; \emph{alter}\} \textbf{until} no $v.p$ changes
\end{itemize}

Even though other algorithms fall within our framework, we focus on these four because they are simple and we can prove good bounds for them.
Before deriving bounds, we make some remarks about these and similar algorithms. Each of the four algorithms is distinct: for each pair of algorithms, there is a graph on which the two algorithms do different parent changes, as one can verify case-by-case.
Use of \emph{direct-connect} or \emph{direct-root-connect} requires edge alteration to obtain a correct algorithm.
Although different, algorithms \textsf{R} and \textsf{RA} behave similarity, and we use the same techniques to obtain bounds for both of them.
In algorithm \textsf{S}, the inner loop makes all trees flat, so the connect only changes the parents of roots, making \textsf{S} equivalent to the algorithm obtained by replacing \emph{parent-connect} by \emph{parent-root-connect}.
The algorithm formed from \textsf{S} by replacing \emph{parent-connect} by \emph{direct-connect} and adding alter to the end of the main loop is also equivalent to \textsf{S}.
Algorithms \textsf{S}, \textsf{R}, and \textsf{RA} are \emph{monotone} in that once two vertices are in the same tree, they remain in the same tree. This is not true for algorithm \textsf{A}. We discuss other related algorithms in \S{\ref{rw_section}}.

We conclude this section with a proof that our algorithms are correct. We begin by establishing some properties of edge alteration. We call an iteration of the main loop a \emph{round}.
We call a vertex \emph{bare} if is not an edge end and \emph{clad} if it is.

To prove correctness, we need the following key result.
\begin{lemma}\label{path_lemma}
    After $k$ rounds of algorithm \textsf{A} or \textsf{RA},
    each vertex that is clad or a root has a path to the smallest vertex in its component.
    If the algorithm is \textsf{A}, there is such a path with at most $\max\{0, d - k\}$ edges.
\end{lemma}
\begin{proof}
    The proof is by induction on $k$.
    The lemma is true for $k = 0$.
    Let $k > 0$, let $x$ be a clad vertex or a root at the end of round $k$,
    and let $w$ be the smallest vertex in the same component as $x$.
    If $x = w$, the lemma holds for $x$ and $k$.
    Suppose $x > w$.
    If $x$ is a root just before the alteration in round $k$, let $u = x$; otherwise, let $u$ be a vertex such that $u.p = x$ and $u$ is clad just before the alteration in round $k$.
    Such a $u$ must exist since $x$ is clad at the end of round $k$.
    Since $u \ge x > w$, $x \neq w$.
    By the induction hypothesis, there is a path from $u$ to $w$ at the beginning of round $k$, and if the algorithm is \textsf{A} the path contains at most $\max\{0, d - k + 1\}$ edges.
    Let $\{v, w\}$ be the last edge on this path. The alteration in round $k$ converts this path to a path from $x$ to $w$.
    Since $\{v, w\}$ exists at the beginning of round $k$, $v.p = w$ after the connect in round $k$. Thus the alter in round $k$ deletes $\{v, w\}$, so if the algorithm is \textsf{A}, the path from $x$ to $w$ contains at most $\max\{0, d - k\}$ edges.
\end{proof}

\begin{lemma}\label{lem_arc_end_move_up}
    In algorithms \textsf{A} and \textsf{RA}, \emph{(\romannumeral1)} an alter makes all leaves bare;
    \emph{(\romannumeral2)} once a vertex is a leaf it stays a leaf;
    \emph{(\romannumeral3)} once a leaf is bare it stays bare; and
    \emph{(\romannumeral4)} every non-root grandparent is clad.
\end{lemma}
\begin{proof}
    Part (\romannumeral1) is immediate from the definition of alter.
    A vertex becomes a leaf either in a connect or in a shortcut. A shortcut does not make a leaf into a non-leaf. Any leaf existing at the beginning of a connect is bare by (\romannumeral1) and hence cannot be made a non-leaf by the connect. Neither a connect nor a shortcut can make a bare vertex clad. Thus (\romannumeral2) and (\romannumeral3) hold.

    We prove (\romannumeral4) by induction on the number of steps. No vertex is a grandparent initially, so (\romannumeral4) holds initially. Let $x$ be a vertex that is not a non-root grandparent before a connect; that is, $x$ is either a root or has no grandchildren. If $x$ acquires a parent or child during the connect, then it is clad, so the lemma holds for $x$ after the connect. Since all children of $x$ (if any) are leaves, they cannot become non-leaves during the connect by (\romannumeral1), so $x$ cannot become a grandparent as a result of one of its children acquiring a child. Thus the connect preserves (\romannumeral4). A vertex that is a non-root grandparent after a shortcut is also a non-root grandparent before the shortcut, so a shortcut preserves (\romannumeral4). Suppose $x$ is a non-root grandparent before an alter. Then $x$ is a great-great grandparent before the shortcut preceding the alter, so it has a non-root grandchild $y$ that is clad. The shortcut makes $y$ a child of $x$. By Lemma~\ref{path_lemma} there is a path from $y$ to the smallest vertex in its component. The alter transforms this path into a path from $x$ to the smallest vertex.
    Since $x$ is not a root, it is not the smallest vertex, so the path from $x$ contains at least one edge, making $x$ clad.
\end{proof}

\begin{lemma}\label{lem_leaf}
    In all our algorithms, a leaf stays a leaf.
\end{lemma}
\begin{proof}
    For \textsf{A} or \textsf{RA}, this is part (\romannumeral3) of Lemma~\ref{lem_arc_end_move_up}.
    For the other algorithms, a connect or shortcut cannot make a leaf into a parent.
\end{proof}

\begin{theorem}\label{correctness1}
    All our algorithms are correct.
\end{theorem}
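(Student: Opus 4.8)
The plan is to prove three things in turn and then combine them: the repeat loop always halts; vertices in different components end with different labels; and vertices in the same component end with the same label. First, for halting, I would establish by a one-step induction that $u.p\le u$ is an invariant for every vertex $u$: it holds initially, \textsc{update} and \textsc{root-update} replace a parent by a minimum of itself and other vertices, \textsc{shortcut} replaces $u.p$ by $u.p.p\le u.p\le u$, and the three connection steps and \textsc{alter} never raise a parent. The same case analysis shows each $u.p$ is nonincreasing throughout the execution; since it stays in $[n]$ it changes at most $n-1$ times, so there are at most $n(n-1)$ parent changes in all, and since every round but the last changes some parent, the loop exits after $O(n^2)$ rounds. (As a byproduct, $u.p\le u$ re-establishes that the label digraph never has a cycle other than a loop, since every non-loop arc strictly decreases the vertex, so it is always a forest.)

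For different-component vertices I would prove the invariant that $u$ and $u.p$ always lie in the same connected component, which forces every tree to lie in a single component; a vertex's final label is then a vertex of its own component, so distinct components receive distinct labels. This needs the companion invariant that the two current ends of every edge lie in the same component, which holds initially and is preserved by \textsc{alter} since that step only replaces an end by its parent. Granting these, each parent assignment $v.p\gets x$ performed by an update step uses a value $x$ that is either the old $v.p$ or a vertex sent to $v$ along some current edge $e$; in the latter case $x$ and $v$ are both in the component spanned by $e$'s ends, hence in one component, and \textsc{shortcut} keeps $u$, $u.p$, $u.p.p$ in one component by transitivity.

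For same-component vertices, let $S$ be the state at the start of the last round (in which no parent changes). Since that round's connection step only sends messages and its update step changes nothing, its \textsc{shortcut} acts on $S$ and also changes nothing, so $v.p.p=v.p$ for every $v$; i.e., every tree is flat in $S$. Suppose some component $C$ had two roots, and let $r$ be its smallest root; by Lemma~\ref{path_lemma} there is a path of current edges in $S$ from $r$ to another root $r'$ of $C$. For algorithms \textsf{P}, \textsf{E}, and \textsf{R}: along each edge $(a,b)$ of the path the connection step requests $a.p$ and $b.p$, which by flatness are roots of $C$; were these two roots distinct, the step would deliver the smaller to the larger, whose parent—itself, being a root—would be lowered by the update step, contradicting that the round changes nothing. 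Hence $a.p=b.p$ along every path edge, so $r=r.p=r'.p=r'$, a contradiction. For algorithm \textsf{A} (hence also \textsf{RA}) I first note that in $S$ every current edge joins two roots: $S$ is the state just after the previous round's \textsc{alter} (or the initial state, where every vertex is a root), and \textsc{alter} leaves parents unchanged, so the trees are flat just before that \textsc{alter} as well and \textsc{alter} set every edge end to a root. Then the path from $r$ to $r'$ visits only roots, its first edge is $(r,s)$ with $s$ a root of $C$ distinct from $r$, hence $s>r$; in the last round \textsc{connect} sends $r$ to $s$ and the update step lowers $s.p$ from $s$ to $r$, a contradiction. Therefore at halting each component is a single flat tree, which together with the previous paragraph gives the desired characterization.

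The step I expect to be the main obstacle is the case of algorithm \textsf{A}: one must propagate flatness backward across the trailing \textsc{alter} to know that current edges join roots—without this, \textsc{connect} by itself need not force a parent change—and one must handle the base case in which the last round is also the first. A secondary point to check is that using \textsc{root-update} in place of \textsc{update} does not weaken the argument, which is fine because the vertex whose parent changes in each contradiction is always a root.
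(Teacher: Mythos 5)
Your proof is correct and follows essentially the same route as the paper's: trees stay inside their components, and by Lemma~\ref{path_lemma} a component split among several (necessarily flat) trees admits a crossing edge that forces a parent change, so the loop cannot exit until each component is one flat tree. You are more explicit than the paper about termination (via monotone parents) and about why, for algorithm \textsf{A}, the crossing edge's ends are roots in the terminal state --- a detail the paper's one-line case split glosses over.
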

\begin{proof}
    For each algorithm, a proof by induction on the number of parent changes and alterations (if any) shows that $v$ and $v.p$ are in the same component for each vertex $v$, as are $v$ and $w$ for each edge $\{v, w\}$.
    Thus all vertices in any tree are in the same component.

    Parents only decrease, so the parent function always defines a forest.
    There are at most $n (n - 1) / 2$ parent changes.
    In the following paragraph, we prove that there is at least one parent change in any round except the last one, so each algorithm terminates in at most $n (n - 1) / 2 + 1$ rounds.

    Consider the state just before a round.  If there is at least one non-flat tree, then either the connect step or the first shortcut step will change a parent.
    Suppose all trees are flat but the vertices in some component are in two or more trees.
    Let $T$ be the tree containing the smallest vertex in the component and $T'$ another tree in the component.
    It is immediate for algorithms \textsf{S} and \textsf{R} and follows from Lemma~\ref{path_lemma} for \textsf{A} and \textsf{RA} that there is a path of edges from the root of $T'$ to the root of $T$.
    Thus there is an edge $\{v, w\}$ with $w$ but not $v$ in $T$.
    Since all trees are flat, $v.p$ and $w.p$ are roots.
    In algorithms \textsf{A} and \textsf{RA}, $v$ and $w$ are roots by part (\romannumeral1) of Lemma~\ref{lem_arc_end_move_up}.
    In algorithms \textsf{S} and \textsf{R}, the connect step will make $w.p$ the parent of $v.p$; in algorithms \textsf{A} and \textsf{RA}, the connect step will make $w$ the parent of $v$.
    We conclude that the algorithm can only stop when all trees are flat and there is one tree per component.
\end{proof}

\section{Efficiency}\label{sec_analysis}

On a graph that consists of a path of $n$ vertices, all our algorithms take $\Omega(\lg n)$ steps, since this graph has one component of diameter $n$, and the general lower bound of Theorem~\ref{lower_bound} applies.
We prove worst-case step bounds of $O(\min\{d, \lg n\} \lg n)$ for \textsf{S}, $O(\min\{d, \lg^2 n\})$ for \textsf{A}, and $O(\lg n)$ for \textsf{R} and \textsf{RA}.
We also show that algorithm \textsf{S} can take $\Omega(\lg^2 n)$ steps, and algorithms \textsf{R} and \textsf{RA} can take $\Omega(\lg n)$ steps even on graphs with constant diameter $d$.

\subsection{Analysis of \textsf{S}}

\begin{theorem}\label{thm_s_log}
    Algorithm \textsf{S} takes $O(\min\{d, \lg n\} \lg n)$ steps.
\end{theorem}
\begin{proof}
    Since the maximum depth of any tree is $n - 1$, and a shortcut reduces the depth of a depth-$k$ tree to $\lceil k/2 \rceil$, the inner loop stops in $O(\lg n)$ iterations.
    Let $u$ be the smallest vertex in some component.
    We prove by induction on $k$ that after $k$ rounds every vertex in the component at distance $k$ or less from vertex $u$ has parent $u$, which implies that the algorithm stops in at most $d + 1$ rounds.
    This is true initially.
    Let $v$ be a vertex at distance $k$ from $u$.
    If $v.p \neq u$ just before round $k$, there is an edge $\{v, w\}$ such that $w.p = u$ by the induction hypothesis. After the connect step in round $k$, $v.p = u$.

    To obtain an $O(\lg n)$ bound on the number of rounds, we prove that if there are two or more trees in a component, two rounds reduce the number of such trees by at least a factor of two.
    Call a root \emph{minimal} if edges incident to its tree connect it only with trees having higher roots.  A connect step makes each non-minimal root into a non-root.
    Let $x$ be a minimal root, and let $\{v, w\}$ be an edge with $v$ in the tree rooted at $x$ and $w$ in a tree rooted at $y > x$.
    If the connect step in the round does not make $y$ a child of $x$, then $y.p < x$ after the round, causing $x$ to become a non-root in the next round.

    Suppose there are $k > 1$ roots in a component at the beginning of a round.
    Among the minimal roots, suppose there are $j$ that get a new child as a result of the connect step in the round.
    At least $\max\{k-j, j\} \ge k/2$ roots are non-roots after two rounds.
\end{proof}

We show by example that \textsf{S} can take $\Omega(\lg^2 n)$ steps.

If there is a tree of depth $2^k$ just before the inner loop in a round, this loop will take at least $k$ steps.
If every round produces a \emph{new} tree of depth $2^k$, and the algorithm takes $k$ rounds, the total number of steps will be $\Omega(k^2)$. Our bad example is based on this observation.
It consists of $k$ components such that running algorithm \textsf{S} for $i$ rounds on the $i$-th component produces a tree that contains a path of length $2^k$.

At the beginning of a round of the algorithm, we define the \emph{implied graph} to be the graph whose vertices are the roots and whose edges are the pairs $\{v.p, w.p\}$ such that $\{v, w\}$ is a graph edge and $v.p \neq w.p$. Restricted to the roots, the subsequent behavior of algorithm \textsf{S} on the original graph is the same as its behavior on the implied graph.
We describe a way to produce a given implied graph in a given number of rounds.

To produce a given implied graph $G$ with vertex set $[n]$ in one round, we start with the \emph{generator} $g(G)$ of $G$, defined to be the graph with vertex set $[2n]$, edges $\{v, v+n\}$ and $\{v+n, v\}$ for each $v \in [n]$, and an edge $\{v+n, w+n\}$ for each edge $\{v, w\}$ in $G$.
A round of algorithm \textsf{S} on $g(G)$ does the following:
the connect step makes $v+n$ a child of $v$ for $v \in [n]$, and the shortcuts do nothing.
The resulting implied graph has vertex set $[n]$ and an edge $\{v, w\}$ for each edge $\{v,w\}$ in $G$; that is, it is $G$.

To produce a given implied graph $G$ in $i$ rounds, we start with $g^i(G)$. An induction on the number of rounds shows that after $i$ rounds of algorithm \textsf{S}, $G$ is the implied graph.

Let $k$ be a positive integer, and let $P$ be the path of vertices $1, 2, ..., 2^k + 1$ and edges $\{i, i+1\}$ for $i \in [2^k]$.
Our bad example is the disjoint union of $P, g(P), g^2(P), ..., g^{k-1}(P)$, with the vertices renumbered so that each component has distinct vertices and the order within each component is preserved.
Algorithm \textsf{S} takes $\Omega(k^2)$ steps on this graph.
The number of vertices is $n = (2^k + 1) (2^k - 1) = 2^{2k} - 1$. The number of edges is $2^{2k} - k - 1$, since the graph is a set of $k$ trees.
Thus the number of steps is $\Omega(\lg^2 n)$.

\subsection{Analysis of \textsf{A}}\label{subsec_d_bound}

In analyzing \textsf{A}, \textsf{R}, and \textsf{RA}, we assume that the graph is connected:
this is without loss of generality since \textsf{A}, \textsf{R}, and \textsf{RA} operate independently and concurrently on each component, and each round does $O(1)$ steps.

\begin{theorem}\label{e_diameter}
    Algorithm \textsf{A} takes $O(d)$ steps.
\end{theorem}
\begin{proof}
    By Lemma~\ref{path_lemma}, after $d$ rounds there are no edges and only one tree.
    By Lemma~\ref{lem_arc_end_move_up}, this tree has depth at most two.
    Thus the algorithm stops after at most $d + 2$ rounds.
\end{proof}

A simple example shows that Theorem~\ref{e_diameter} is false for \textsf{S}, \textsf{R}, and \textsf{RA}.
Consider the graph whose edges are $\{i, i+1\}$ for $i \in [n-1]$ and $\{i, n\}$ for $i \in [n-1]$.
After the first connection step, there is one tree: $1$ is the parent of $2$ and $n$, and $i$ is the parent of $i+1$ for $i \in [2, n-2]$.
Subsequent connection steps do nothing; the tree only becomes flat after $\Omega(\lg n)$ shortcuts.

To obtain an $O(\lg^2 n)$ step bound for algorithm \textsf{A}, we show that $O(\lg n)$ rounds reduce the number of non-leaf vertices by at least a factor of $2$, from which an overall $O(\lg^2 n)$ step bound follows.

It is convenient to shift our attention from rounds to passes.
A \emph{pass} is the interval from the
beginning of one shortcut to the beginning of the next.
Pass $1$ begins with the shortcut in round $1$ and ends with the connect in round $2$.
We need one additional definition.
A vertex is \emph{deep} if it is a non-root with at least one child and all its children are leaves.

\begin{lemma} \label{lem_deep_leaf}
    A vertex that is deep at the beginning of a pass is a leaf at the end of the pass.
\end{lemma}
\begin{proof}
    Let $x$ be a vertex that is deep at the beginning of a pass.
    The shortcut in the pass makes $x$ a leaf. Once a vertex is a leaf, it stays a leaf by Lemma~\ref{lem_arc_end_move_up}.
\end{proof}


\begin{lemma}\label{lem_deep}
    Suppose there are at least two roots at the beginning of a pass, and that $x$ is a root all of whose children are bare leaves. Then $x$ is not a root after the pass.
\end{lemma}
\begin{proof}
    By Lemma~\ref{path_lemma}, at the beginning of the pass there is an edge $\{v, w\}$ with $v$ but not $w$ in the tree with root $x$. Since all children of $x$ are bare leaves, $v = x$. The edge $\{x, w\}$ existed at the beginning of the connect just before the pass. Since this connect did not make $x$ a non-root, $w > x$, and since $w$ is not a child of $x$ after the connect, $w.p < w$ after it. The alter in the pass replaces $\{x, w\}$ by $\{x, w.p\}$. The connect in the pass then makes $x$ a non-root.
\end{proof}

We need one more idea, which we borrow from the analysis of \emph{path halving}, a method used in disjoint set union algorithms that shortcuts a single path \cite{DBLP:journals/jacm/TarjanL84}.
For any vertex $v$, we define the \emph{level} of $v$ to be $v.l = \lfloor \lg(v - v.p) \rfloor$ unless $v.p = v$, in which case $v.l = 0$.
The level of a vertex is at least $0$, less than $\lg n$, and non-decreasing.
The following lemma quantifies the effect of a shortcut on a sufficiently long path.

\begin{lemma}\label{level_sum_increase}
    Assume $n \ge 4$. Consider a tree path $P$ of $k \ge 4 \lg n$ vertices.
    A shortcut increases the sum of the levels of the vertices on $P$ by at least $k/4$.
\end{lemma}
\begin{proof}
    Let $u$, $v$, and $w$ be three consecutive vertices on $P$, with $v$ the parent of $u$ and $w$ the parent of $v$.
    Let $i$ and $j$ be the levels of $u$ and $v$, respectively.
    A shortcut increases the level of $u$ from $i$ to $\lfloor \lg(u - w) \rfloor = \lfloor \lg(u - v + v - w) \rfloor \ge \lfloor \lg(2^i + 2^j) \rfloor$.
    If $i < j$, this increases the level of $u$ by at least $j - i$; if $i = j$, it increases the level of $u$ by one.

    Let $x_1, x_2, ..., x_{k-1}$ be the vertices on $P$ from largest to smallest (deepest to shallowest), excluding the last one.
    For each $i \in [k - 2]$, let $\Delta_i = x_{i+1}.l - x_i.l$.
    The sum of the $\Delta_i$'s is $\Sigma = x_{k-1}.l - x_1.l \ge - \lg n$ since the sum telescopes.
    Let $k_{+}$, $k_0$, and $k_{-}$, respectively, be the number of positive, zero, and negative $\Delta_i$'s, and let $\Sigma_{+}$ and $\Sigma_{-}$ be the sum of the positive $\Delta_i$'s and the sum of the negative $\Delta_i$'s, respectively.
    By the previous paragraph, the sum of the levels of the vertices on $P$ increases by at least $\Sigma_{+} + k_0$.

    From $\Sigma = \Sigma_{+} + \Sigma_{-}$ we obtain $\Sigma_{+} > - \Sigma_{-} - \lg n$.
    Since the $\Delta_i$'s are integers, $\Sigma_{+} \ge k_{+}$ and $- \Sigma_{-} \ge k_{-}$.
    Thus $2 (\Sigma_{+} + k_0) \ge 2 \Sigma_{+} + k_0 \ge k_{+} + k_0 + k_{-} - \lg n = k - 2 - \lg n \ge k / 2$, since $n \ge 4$ implies $k / 2 \ge 2 \lg n \ge \lg n + 2$.
    Dividing by two gives $\Sigma_{+} + k_0 \ge k / 4$.
\end{proof}

We combine Lemmas \ref{lem_deep_leaf}, \ref{lem_deep}, and \ref{level_sum_increase} to obtain the desired result.


\begin{lemma}\label{green_reduce_constant}
    Assume $n \ge 4$. Suppose that at the beginning of pass $i$ there are at least two roots and more than $k/2$ but at most $k$ non-leaves.
    After $O(\lg n)$ passes there are at most $k / 2$ non-leaves or at most one root.
\end{lemma}
\begin{proof}
    Assume the hypotheses of the lemma are true. Call a vertex \emph{fresh} if it is a non-leaf at the beginning of pass $i$ and \emph{stale} otherwise. After pass $i$, all the stale leaves are bare by Lemma~\ref{lem_arc_end_move_up}.
    Suppose the hypotheses of the lemma hold at the beginning of some pass after pass $i$. At least one of the following four cases occurs:
    \begin{enumerate}[label=(\roman*)]
        \item 
            There are at least $k/8$ clad leaves. One pass makes all clad leaves bare by Lemma~\ref{lem_arc_end_move_up}.
            Since each clad leaf is fresh and can only become bare once, there are at most $k / (k/8) = 8$ passes in which this case can occur.
            \label{green_case_1}
        \item There are at least $k/8$ roots of flat trees, all of whose children are bare. One pass makes all such roots non-roots by Lemma~\ref{lem_deep}. There are at most $k/(k/8) = 8$ passes in which this case can occur. \label{green_case_2}
        \item There are at least $k / (32 \lg n)$ deep vertices. This pass makes all these vertices into leaves by Lemma~\ref{lem_deep_leaf}. There are at most $k / (k / (32 \lg n)) = 32 \lg n$ passes in which this case can occur. \label{green_case_3}
        \item None of the first three cases occurs.
            Since Cases \ref{green_case_1} and \ref{green_case_2} do not occur, there are at most $k/4$ roots of flat trees.
            Thus there are at least $k/4$ non-leaves in non-flat trees.
            Since Case~\ref{green_case_3} does not occur, there are at most
            $k / (32 \lg n)$ deep vertices.
            From each of these deep vertices there is a tree path to a root.
            Every non-leaf in a non-flat tree is on one or more such paths. Find a deep vertex whose tree path is longest, and delete this path. This may break the tree containing the path into several trees, but this does not matter: it does not increase the number of deep vertices, although it may convert some deep vertices into roots, which only improves the bound.
            Repeat this process until at least $k/8$ non-leaves are on deleted paths.
            Each path contains at least $(k/8) / (k / (32 \lg n)) = 4 \lg n$ vertices.
            By Lemma~\ref{level_sum_increase}, the shortcut increases the sum of the levels of the vertices on these paths by at least $k / 32$.
            There are at most $(k \lg n) / (k/32) = 32 \lg n$ passes in which this case can occur.
    \end{enumerate}
    We conclude that after at most $16 + 64 \lg n$ passes, either there are at most $k/2$ non-leaves or at most one root.
\end{proof}

\begin{theorem}\label{main_result2}
    Algorithm \textsf{A} takes $O(\lg^2 n)$ steps.
\end{theorem}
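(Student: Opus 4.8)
The plan is to combine the already-established $O(d)$ bound (Theorem \ref{A_diameter}) with a structural argument showing that after $O(\lg n)$ rounds of \textsf{A}, the effective diameter of the (contracted) graph has dropped by a constant factor, so that $O(\lg n)$ such "phases" suffice and each phase costs $O(\lg n)$ rounds. Concretely, I would track, for each green vertex $v$, the quantity $\delta(v)$ = the minimum number of current edges on a path from $v$ to the minimum vertex $u$ — Lemma \ref{A_edge_reduce} already shows this drops by $1$ each round, but that only gives $O(d)$. To get $O(\lg n)$ I want to show that \textsc{shortcut} also contracts this edge-distance geometrically: when a green vertex has a green grandparent (Lemma \ref{A_color}), shortcutting halves tree depth, and \textsc{alter} then turns tree arcs into edges, so the "edge diameter" of the graph on green vertices should both decrease by one via \textsc{connect}/\textsc{alter} and halve via \textsc{shortcut}-induced contraction. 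The target invariant is: after round $i$, every green vertex is connected to $u$ by a path of at most $\max\{2, d/2^{\,i - c\lg n}\cdot(\text{something})\}$ current edges — i.e. a recurrence of the shape $D_{i+1} \le D_i/2 + O(1)$ on the edge-diameter, which solves to $O(\lg n)$ rounds to reach $D = O(1)$, at which point Theorem \ref{A_diameter}'s endgame (two more rounds) finishes.

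The key steps, in order: (1) Strengthen Lemma \ref{A_color}/\ref{A_edge_reduce} to a quantitative statement bounding, after each round, the edge-distance from any green vertex to $u$ in terms of both tree depth and previous edge-distance. (2) Show that a \textsc{shortcut} step reduces tree depth by a factor of (roughly) two among green vertices — standard pointer-doubling — and that the following \textsc{alter} exports this to the edge metric: a path of $k$ tree arcs between two roots becomes, after \textsc{alter}, a path of $k$ edges between those roots, and \textsc{shortcut} before the \textsc{alter} has already collapsed runs of tree arcs. (3) Derive the recurrence $D_{i+1} \le \lceil D_i/2\rceil + O(1)$ (the $+O(1)$ absorbing the \textsc{connect} step and the loop/deletion bookkeeping from the analysis convention that \textsc{alter} makes loops rather than deleting). (4) Solve: $O(\lg d) = O(\lg n)$ rounds bring $D$ down to a constant; apply the endgame. (5) Multiply by the $O(1)$ steps per round to conclude $O(\lg n)$ steps — wait, that would give $O(\lg n)$, which is stronger than claimed, so in fact the honest version of this recurrence must only give $D_{i+1} \le D_i/2 + (\text{slowly growing})$ or must lose a $\lg n$ factor somewhere; the realistic plan is a two-level "multi-round" argument in which within a block of $\Theta(\lg n)$ rounds the edge-diameter is shown to halve, giving $O(\lg n)$ blocks of $O(\lg n)$ rounds each, hence $O(\lg^2 n)$.

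The main obstacle — and the reason the bound is $\lg^2 n$ rather than $\lg n$ — is that \textsc{update} in algorithm \textsf{A} (as opposed to \textsc{root-update} in \textsf{R}) can move subtrees between trees, so the label forest is not monotone and a clean amortized/potential argument à la Awerbuch–Shiloach does not directly apply. I expect the hard part to be controlling these subtree migrations: when a subtree reattaches under a smaller root, the \textsc{alter} step's bookkeeping of which edges survive as non-loops becomes delicate, and one must argue that the edge-distance bound of Lemma \ref{A_edge_reduce} is not destroyed by such a move. The cleanest route is probably to avoid reasoning about the evolving forest directly and instead reason only about the sequence of graphs $G_0 \supseteq G_1 \supseteq \cdots$ obtained by the \textsc{alter}s (each $G_{i+1}$ a minor of $G_i$ after contracting tree arcs), showing that the diameter of $G_i$ restricted to its "active" (green, non-loop-incident) vertices obeys the halving recurrence; the subtree-migration phenomenon then only affects which vertices are active, not the diameter bound itself, and the $O(\lg^2 n)$ falls out of $O(\lg n)$ halvings each taking $O(\lg n)$ rounds.
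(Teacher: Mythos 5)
Your proposal is a plan rather than a proof, and its central step is exactly the part that is missing. Everything hinges on the claim that the edge-diameter $D$ of the contracted graph halves within a block of $\Theta(\lg n)$ rounds (your step (3)/(5)), but you give no mechanism for this. The only tools you invoke are (a) \textsc{shortcut} halves tree depth and (b) \textsc{alter} pushes tree contractions into the edge metric. Neither implies diameter halving: the edges on a shortest path of current edges from a green vertex to the minimum vertex $u$ generally join vertices lying in \emph{different} trees, and \textsc{shortcut} contracts only within trees. A path edge disappears (becomes a loop) only when its two ends acquire the same parent, and the argument of Lemma~\ref{A_edge_reduce} certifies this for just one edge per round (the last one), which is precisely why that lemma yields $O(d)$ and nothing better. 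You noticed the tension yourself --- the naive recurrence would give $O(\lg n)$, "stronger than claimed" --- and resolved it by postulating a two-level argument whose inner claim (halving per $\Theta(\lg n)$ rounds) is unsupported. That claim is the whole theorem; without a proof of it the proposal has no content beyond the already-established $O(d)$ bound.

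The paper's actual proof does not reason about diameter at all. It shows that $O(\lg n)$ rounds halve the \emph{number of green vertices} (Lemma~\ref{green_reduce_constant}), via a three-way case split: green roots with no green children are killed by the next \textsc{connect} (Lemma~\ref{non_green_non_root}); green non-roots with no green children are turned red by \textsc{shortcut} plus \textsc{alter} (Lemma~\ref{A_color}); and if few green vertices lack green children, the green vertices form many vertex-disjoint tree paths of length $\Omega(\lg n)$ (Lemmas~\ref{long_green_path} and~\ref{disjoint_green_path}), on which a union-find-style level potential $v.l = \lfloor \lg (v - v.p) \rfloor$ --- nondecreasing by Lemma~\ref{parent_monotone} and bounded by $\lg n'$ --- increases by $\Omega(k)$ per \textsc{shortcut} (Lemma~\ref{level_sum_increase}). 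This level/potential device, borrowed from the analysis of path compression, is the ingredient that converts "long tree paths exist" into "this can happen only $O(\lg n)$ times," and it has no counterpart in your proposal. If you want to salvage a diameter-based route, you would first have to produce a concrete reason why a constant fraction of the edges on a shortest $v$--$u$ path become loops within $O(\lg n)$ rounds; I do not see one, and the subtree migrations you correctly worry about make it harder, not easier.
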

\begin{proof}
    The theorem is immediate from Lemma~\ref{green_reduce_constant}, since once there is a single root the algorithm stops after $O(\lg n)$ steps.
\end{proof}

We do not know whether the bound in Theorem~\ref{main_result2} is tight.
We conjecture that it is not, and that algorithm \textsf{A} takes $O(\lg n)$ steps.
We \emph{are} able to prove an $O(\lg n)$ bound for the monotone algorithms \textsf{R} and \textsf{RA}, which we do in the next section.

An algorithm similar to algorithm \textsf{A} is algorithm \textsf{P}, which replaces \emph{direct-connect} in \textsf{A} by \emph{parent-connect} and deletes \emph{alter}.
The following pseudocode implements the main loop of this algorithm: \\

Algorithm \textsf{P}: \textbf{repeat} \{\emph{parent-connect}; \emph{shortcut}\} \textbf{until} no $v.p$ changes \\

We conjecture that this algorithm, too, has an $O(\lg n)$ step bound but are unable to prove even an $O(\lg^2 n)$ bound, the problem being that Lemma~\ref{lem_deep} is false for this algorithm.

\subsection{Analysis of \textsf{R} and \textsf{RA}} \label{subsection_lsb}

Lemma~\ref{lem_deep} does not hold for algorithms \textsf{R} and \textsf{RA}: a flat tree can remain unchanged for a non-constant number of rounds.
But we can obtain an even better bound than that of Theorem~\ref{main_result2} by using a different analytical technique, that of Awerbuch and Shiloach \cite{DBLP:journals/tc/AwerbuchS87}, extended to cover a constant number of rounds rather than just one.

We call a tree \emph{passive} in a round if it exists both at the beginning and at the end of the round; that is, the round does not change it.
A passive tree is flat, but a flat tree need not to be passive.
We call a tree \emph{active} in a round if it exists at the end of the round but not at the beginning.
An active tree contains at least two vertices and has depth at least one.

We say a connect \emph{links} trees $T$ and $T'$ if it makes the root of one of them a child of a vertex in the other.
If the connect makes the root of $T$ a child of a vertex in $T'$, we say the connect \emph{links $T$ to $T'$}.

\begin{lemma}\label{lem_link}
    If trees $T$ and $T'$ are passive in round $k$, then there is no edge with one end in $T$ and the other end in $T'$, and the connect in round $k + 1$ does not link $T$ and $T'$.
\end{lemma}
\begin{proof}
    If $T$ and $T'$ were linked in round $k + 1$, there would be an edge connecting them that caused the link.
    In algorithm \textsf{RA}, Lemma~\ref{lem_arc_end_move_up} implies that any such edge connects the roots of $T$ and $T'$ at the beginning of round $k$.
    Thus in either algorithm $T$ and $T'$ would have been linked in round $k$, contradicting their passivity in round $k$.
\end{proof}

If $T$ exists at the end of round $k$, its \emph{constituent trees} at the end of round $j \le k$ are the trees existing at the end of round $j$ whose vertices are in $T$.
Since algorithms \textsf{R} and \textsf{RA} are monotone, these trees partition the vertices of $T$.

\begin{lemma}\label{lem_constituent_active}
    Let $T$ be an active tree in round $k$. Then for $j \le k$ at least one of the constituent trees of $T$ in round $j$ is active.
\end{lemma}
\begin{proof}
    The proof is by induction on $j$ for $j$ decreasing.
    The lemma holds for $j=k$ by assumption.
    Suppose it holds for $j > 0$. If the constituent trees of $T$ in round $j-1$ were all passive, the connect step in round $j$ would change none of them by Lemma~\ref{lem_link}.
    Neither would the shortcut in round $j$, contradicting the existence of an active constituent tree in round $j$.
\end{proof}

Since algorithm \textsf{RA} is slightly simpler to analyze than \textsf{R}, we first analyze \textsf{RA}, and then discuss the changes needed to make the analysis apply to \textsf{R}.
We measure progress using the potential function of Awerbuch and Shiloach, modified so that it is non-increasing and passive trees have zero potential.
In \textsf{RA} we define the \emph{individual potential} of a tree $T$ at the end of round $k$ to be zero if $T$ is passive in round $k$, or two plus the maximum of zero and the maximum depth of an arc end in $T$ if $T$ is active in round $k$.
If $T$ exists at the end of round $k$ and $j \le k$, we define the \emph{potential} $\Phi_j(T)$ of $T$ at the end of round $j$ to be the sum of the individual potentials of its constituent trees at the end of round $j$.
We define the \emph{total potential} at the end of round $k$ to be the sum of the potentials of the trees existing at the end of round $k$.

We shall prove that the total potential decreases by a constant factor in a constant number of rounds.
This will give us an $O(\lg n)$ bound on the number of rounds.
It suffices to consider each active tree individually.

\begin{lemma}\label{lem_high_tree}
    Let $T$ be active in round $k > 1$ of \textsf{RA}.
    Then $\Phi_{k-1}(T) \ge \Phi_k(T)$.
    If $\Phi_{k-1}(T) \ge 4$ then $\Phi_{k-1}(T) \ge (5/4) \Phi_k(T)$.
\end{lemma}
\begin{proof}
    Let $t \ge 1$ be the number of active constituent trees of $T$ in round $k-1$, and let $\ell = \Phi_{k-1}(T) - 2t$. Then $\ell \ge 0$.
    Consider the tree $S$ formed from the constituent trees of $T$ in round $k-1$ by the connect step in round $k$.
    The shortcut in round $k$ transforms $S$ into $T$.
    By Lemma~\ref{lem_link}, along any path in $S$ there cannot be consecutive vertices from two different passive trees.
    By Lemma~\ref{lem_arc_end_move_up}, at the beginning of round $k$ no edge end is a leaf, so the deepest edge end in $S$ has depth at most $\ell+2t$: its path to the root contains at most $\ell+t$ vertices in active constituent trees and at most $t+1$ vertices (all roots) in passive constituent trees.
    The shortcut of $S$ in round $k$ reduces the maximum depth of an arc end to at most $\lceil \ell/2 \rceil + t$. The alter in round $k$ reduces this maximum depth by at least one, to at most $\lceil \ell/2 \rceil + t - 1$.
    Thus $\Phi_k(T) \le \lceil \ell/2 \rceil + t - 1$.
    Since $\Phi_{k-1}(T) = \ell + 2t$ and $t \ge 1$, $\Phi_{k-1}(T) \ge \Phi_k(T)$, giving the first part of the lemma.

    We prove the second part of the lemma by induction on $\Phi_{k-1}(T) = \ell + 2t$.
    If $\Phi_{k-1}(T) = 4$, then $t = 2$ and $\ell=0$, or $t=1$ and $\ell=2$, so $\Phi_k(T) \le \lceil \ell/2 \rceil + t + 1 = 3$.
    If $\Phi_{k-1}(T) = 5$, then $t = 2$ and $\ell=1$, or $t=1$ and $\ell=3$, so $\Phi_k(T) \le 4$. In both cases the second part of the lemma is true.
    Each increase of $\ell$ by two or $t$ by one increases $\Phi_{k-1}(T)$ by two and increases the upper bound of $\lceil \ell / 2 \rceil + t + 1$ on $\Phi_k(T)$ by one, which preserves the inequality $\Phi_{k-1}(T) \ge (5/4) \Phi_k(T)$.
\end{proof}

Lemma~\ref{lem_high_tree} gives a potential drop for any active tree $T$ such that $\Phi_{k-1}(T) \ge 4$.
To obtain a potential drop if $\Phi_{k-1}(T) < 4$, we need to consider two rounds if $\Phi_{k-1}(T) = 3$ and three rounds if $\Phi_{k-1}(T) = 2$.

\begin{lemma}\label{lem_potential_3}
    Let $T$ be an active tree in round $k > 2$ of \textsf{RA} such that $\Phi_{k-1}(T) = 3$. Then $\Phi_{k-2}(T) \ge (4/3) \Phi_k(T)$.
\end{lemma}
\begin{proof}
    The lemma holds if $T$ has at least two active constituent trees in round $k-2$ or one with an edge end of depth two or more, since then $\Phi_{k-2}(T) \ge 4$.
    Suppose neither of these cases occurs. Then $T$ has one active constituent tree, say $T_2$, in round $k-2$.
    By Lemma~\ref{lem_link}, no pair of passive constituent trees of $T$ in round $k-2$ are connected by an edge. Thus they are all connected by an edge with the root or a child of the root of $T_2$.
    Let $T_1$ be the active constituent tree of $T$ in round $k-1$. The root of $T_2$ is the root or a child of the root of $T_1$. It follows that each passive constituent tree of $T$ in round $k-1$ has an edge connecting its root with that of $T_1$.
    Hence the tree containing the vertices of $T_1$ formed by the connect step in round $k$ has no edge ends of depth greater than two, which implies that $T$ has no edge ends of positive depth, making its potential two.
\end{proof}

\begin{lemma}\label{lem_potential_2}
    Let $T$ be an active tree in round $k > 3$ of \textsf{RA} such that $\Phi_{k-1}(T) = 2$. Then $\Phi_{k-3}(T) \ge (3/2) \Phi_k(T)$.
\end{lemma}
\begin{proof}
    The lemma holds if there are at least two active constituent trees of $T$ in round $k-2$ or in round $k-3$, or there is a constituent tree in one of these rounds with an edge end of positive depth: two active trees have total potential at least four, and an active tree with an edge end of depth at least one has a potential of at least three.

    Suppose not. Let $T_3$, $T_2$, and $T_1$ be the active constituent trees of $T$ in rounds $k-3$, $k-2$, and $k-1$, respectively. Since the constituent trees of $T$ in round $k-3$ other than $T_3$ are passive, no edge connects any pair of them by Lemma~\ref{lem_link}. Since all their vertices are in $T$, each such tree must have an edge connecting its root with that of $T_3$. The connect step in round $k-2$ makes the minimum vertex in $T$ the root of $T_2$. Tree $T_2$ has depth at most two by Lemma~\ref{lem_arc_end_move_up}, since only its root can be an edge end.
    The connect step in round $k-1$ links each passive constituent tree of $T$ in round $k-2$ to $T_2$, forming a tree $S_1$ containing all the vertices of $T$ and of depth at most two. The shortcut in round $k-1$ transforms $S_1$ to $T_1$, which is flat since $S_1$ has depth at most two.
    But since $T_1$ is flat and contains all the vertices in $T$, it must be passive in round $k$, a contradiction.
\end{proof}

Having covered all the cases, we are ready to put them together.
Let $a = (3/2)^{1/3} = 1.1447+$, and let $|T|$ be the number of vertices in tree $T$.

\begin{lemma}\label{lem_tree_potential}
    Let $T$ be an active tree in round $k$ of \textsf{RA}.
    Then $\Phi_k(T) \le (3/2) |T| / a^{k-3}$.
\end{lemma}
\begin{proof}
    The proof is by induction on $k$. Since $T$ contains at least two vertices and has potential at most $|T| + 1$, it has potential at most $(3/2) |T|$.
    This gives the lemma for $k \le 3$.
    Let $k > 3$ and suppose the lemma holds for smaller values. We consider three cases.
    If $T$ contains an edge end of depth at least two, then $\Phi_k(T) \le (4/5) \Phi_{k-1}(T) \le (4/5) (3/2) |T| / a^{k-4} \le (3/2) |T| / a^{k-3}$ by Lemma~\ref{lem_high_tree}, the induction hypothesis, the linearity of the total potential, and the inequality $a \le 5/4$.
    If the maximum depth of an edge end in $T$ is one, then $\Phi_k(T) \le (3/4) \Phi_{k-2}(T) \le (3/4) (3/2) |T| / a^{k-5} \le (3/2) |T| / a^{k-3}$ by Lemma~\ref{lem_potential_3}, the induction hypothesis, the linearity of the total potential, and the inequality $a \le (4/3)^{1/2}$.
    If no vertex in $T$ other than the root is an edge end, then $\Phi_k(T) \le (2/3) \Phi_{k-3}(T) \le (2/3) (3/2) |T| / a^{k-6} \le (3/2) |T| / a^{k-3}$ by Lemma~\ref{lem_potential_2}, the induction hypothesis, the linearity of the total potential, and $a = (3/2)^{1/3}$.
\end{proof}

\begin{theorem}\label{thm_ra}
    Algorithm \textsf{RA} takes $O(\lg n)$ steps.
\end{theorem}
\begin{proof}
    By Lemma~\ref{lem_tree_potential}, if $k$ is such that $a^{k-3} > (3/2) n$, then no tree can be active in round $k$. Only the last round has no active trees.
\end{proof}

We can use the same approach to prove an $O(\lg n)$ step bound for algorithm \textsf{R}, but the details are more complicated.
Algorithm \textsf{RA} has the advantage over \textsf{R} that the alter step in effect does extra flattening.
If $T$ is active in round $k$ and $T_1$ is the only active constituent tree of $T$ in round $k-1$, it is possible for the depth of $T_1$ to be one and that of $T$ to be two.
This is not a problem in \textsf{RA}, because the alter decreases the depth of each edge end by one.
But in \textsf{R} we need to give extra potential to trees of depth one to make the potential function non-increasing.

In \textsf{R} we define the individual potential of a tree $T$ at the end of round $k$ to be zero if $T$ is passive in round $k$, or the depth of $T$ if $T$ is active in round $k$ and not flat, or two if $T$ is active in round $k$ and flat.
As in \textsf{RA}, if $T$ exists at the end of round $k$ and $j \le k$, we define the potential $\Phi_j(T)$ of $T$ at the end of round $j$ to be the sum of the potentials of its constituent trees at the end of round $j$, and we define the total potential at the end of round $k$ to be the sum of the potentials of the trees existing at the end of round $k$.
We prove analogues of Lemmas~\ref{lem_high_tree}-\ref{lem_tree_potential} and Theorem~\ref{thm_ra} for \textsf{R}.

\begin{lemma}\label{lem_r_high_tree}
    Let $T$ be active in round $k > 1$ of \textsf{R}.
    Then $\Phi_{k-1}(T) \ge \Phi_k(T)$, and if $\Phi_{k-1}(T) \ge 4$, then $\Phi_{k-1}(T) \ge (5/4) \Phi_k(T)$.
\end{lemma}
\begin{proof}
    Let $\ell$ be the sum of the depths of the active constituent trees of $T$ in round $k-1$, let $t$ be the number of these trees, and let $f$ be the number that are flat.
    Then $\ell \ge t$ and $\Phi_{k-1}(T) = \ell + f \ge \ell$.
    Let $S$ be the tree formed from the constituent trees of $T$ by the connect step in round $k$. This step in round $k$ does not make a leaf into a non-leaf, nor does it make a root of a passive tree the parent of another such root by Lemma~\ref{lem_link}.
    It follows that any path in $S$ contains at most $\ell+2$ vertices: at most $\ell-t$ non-leaf vertices of active constituent trees, at most $t+1$ roots of passive constituent trees, and at most one leaf of some constituent tree.
    The shortcut in round $k$ transforms $S$ into $T$, so the depth of $T$ is at most $\lceil \ell/2 \rceil + 1$, as is its potential: if $\ell=1$, $\lceil \ell/2 \rceil + 1 = 2$, which is the potential of a flat active tree.

    If $\ell = 1$, there is one active constituent tree, and it is flat, so $\Phi_{k-1}(T) = 2 = \Phi_k(T)$, making the lemma true.
    If $\ell \ge 2$, $\Phi_{k-1}(T) \ge \ell \ge \lceil \ell/2 \rceil + 1 \ge \Phi_{k}(T)$.
    If $\ell = 4$, $\Phi_{k-1}(T) \ge 4$ and $\Phi_k(T) \le 3$.
    If $\ell=5$, $\Phi_{k-1}(T) \ge 5$ and $\Phi_k(T) \le 4$.
    Thus the lemma is true if $\ell \le 5$.
    Each increase of $\ell$ by two increases the lower bound of $\ell$ on $\Phi_{k-1}(T)$ by two and increases the upper bound of $\lceil \ell/2 \rceil + 1$ on $\Phi_k(T)$ by one, which preserves the inequality $\Phi_{k-1}(T) \ge (5/4) \Phi_k(T)$, so the lemma holds for all $\ell$ by induction.
\end{proof}

Lemma~\ref{lem_r_high_tree} gives a potential drop for any active tree $T$ such that $\Phi_{k-1}(T) \ge 4$.
To obtain a potential drop if $\Phi_{k-1}(T) < 4$, we need to consider two rounds if $\Phi_{k-1}(T) = 3$ and five rounds if $\Phi_{k-1}(T) = 2$.

\begin{lemma}\label{lem_r_potential_3}
    Let $T$ be an active tree in round $k > 2$ of \textsf{R} such that $\Phi_{k-1}(T) = 3$.
    Then $\Phi_{k-2}(T) \ge (4/3) \Phi_k(T)$.
\end{lemma}
\begin{proof}
    If the constituent trees of $T$ in round $k-1$ or in round $k-2$ include at least two active trees, or one active tree of depth at least four,
    or $T$ has depth at most two, the lemma holds.

    Suppose not.
    Let $T_2$ and $T_1$ be the unique active constituent trees of $T$ in rounds $k-2$ and $k-1$, respectively.
    Let $S_1$ and $S$ be the trees containing the vertices of $T_2$ and $T_1$ formed by the connect steps in rounds $k-2$ and $k-1$, respectively.
    Trees $T_2$, $T_1$, and $T$ all have depth three, and $S_1$ and $S$ have depth five.
    No pair of passive constituent trees of $T$ in round $k-2$ are connected by an edge, so each such tree is connected to $T_2$ by an edge.
    Call such a tree \emph{primary} if it has an edge connecting it with the root or a child of the root of $T_2$, and \emph{secondary} otherwise.

    Since $S_1$ has depth five and $T_2$ has depth three, their roots must be different, so the root of $S_1$ is the minimum of the roots of the primary trees. Each vertex in $T_2$ that is the end of an edge whose other end is in a secondary tree has depth at least two in $T_2$, depth at least three in $S_1$, and depth at least two in $T_1$, which is formed from $S_1$ by the shortcut in round $k-1$. Since $T_1$ has depth three and $S$ has depth five, their roots must be different. But then the root of $S$ must be the root of one of the secondary trees, which is impossible since the root of $T_1$ is not the parent of any of the edge ends connecting $T_1$ with the secondary trees.
\end{proof}

\begin{lemma}\label{lem_r_potential_2}
    Let $T$ be an active tree in round $k > 5$ of \textsf{R} such that $\Phi_{k-1}(T) = 2$. Then $\Phi_{k-5}(T) \ge (3/2) \Phi_k(T)$.
\end{lemma}
\begin{proof}
    If for some $j$ between $k-5$ and $k-1$ inclusive the constituent trees of $T$ in round $j$ include at least two active trees, or one active tree of depth at least three, then $\Phi_j(T) \ge 3$, so the lemma holds.

    Suppose not. Then the constituent trees of $T$ in each round from $k-5$ to $k$ include exactly one active tree, of depth one or two.
    For $j$ between $1$ and $5$ inclusive let $T_j$ be the active constituent tree of $T$ in round $k-j$, for $j$ between $1$ and $4$ inclusive let $S_j$ be the tree containing the vertices of $T_{j+1}$ formed by the connect step in round $k-j$,
    and let $S$ be the tree containing the vertices of $T_1$ formed by the connect in round $k$.
    For $j$ from $1$ to $4$ inclusive, the shortcut in round $k-j$ transforms $S_j$ into $T_j$, and the shortcut in round $k$ transforms $S$ into $T$.

    No edge connects two passive constituent tree of $T$ in round $k-5$, so each such tree has an edge connecting it with $T_5$.
    Call such a tree $T$ \emph{primary} if it has an edge connecting it with the root of $T_t$ or to a child of the root of $T_5$, \emph{secondary} otherwise.
    Since $T_5$ has depth at most two, each secondary tree has an edge connecting it with a grandchild of the root of $T_5$.

    We consider two cases: the roots of $T_5$ and $T_4$ are the same, or they are different.
    In the former case, the roots of all primary trees are greater than the root of $T_5$, and the connect in round $k-4$ makes all of them children of the root of $T_5$.
    In the latter case, the root of $T_4$ is the minimum of the roots of the primary trees, and each such tree other than the one of minimum root is linked to $T_5$ in round $k-4$ or to $T_4$ in round $k-3$.

    Now consider the secondary trees. If the roots of $T_5$ and $T_4$ are the same, then after the shortcut in round $k-4$ each secondary tree has an edge connecting it with the root or a child of the root of $T_4$. By the argument in the preceding paragraph, each such tree will be linked with $T_4$ in round $k-3$ or with $T_3$ in round $k-2$.
    If the roots of $T_5$ and $T_4$ are different, none of the secondary trees has an edge connecting it with the root or a child of the root of $T_4$ at the end of round $k-4$. In this case the roots of $T_4$ and $T_3$ must be the same, so after the shortcut in round $k-3$ each secondary tree has an edge connecting it with the root or a child of the root of $T_3$. Each such tree will be linked with $T_3$ in round $k-2$ or with $T_2$ in round $k-1$. Furthermore the roots of $T_2$ and $T_1$ must be the same.

    It follows that there is only one constituent tree of $T$ in round $k-1$, and this tree is flat.
    But this tree must be $T$, making $T$ passive in round $k$, a contradiction.
\end{proof}

Let $b = (3/2)^{1/5} = 1.0844+$.

\begin{lemma}\label{lem_r_tree_potential}
    Let $T$ be an active tree in round $k$. Then $\Phi_k(T) \le (3/2) |T| / b^{k-5}$.
\end{lemma}
\begin{proof}
    The proof is by induction on $k$. Since $T$ contains at least two vertices and has potential at most $|T| + 1$, it has potential at most $(3/2) |T|$.
    This gives the lemma for $k \le 5$.
    Suppose $k > 5$ and suppose the lemma holds for smaller values. We consider three cases. If the depth of $T$ exceeds three, then $\Phi_k(T) \le (4/5) \Phi_{k-1}(T) \le (4/5) (3/2) |T| / b^{k-6} \le (3/2) |T| / b^{k-5}$ by Lemma~\ref{lem_r_high_tree}, the induction hypothesis, the linearity of the total potential, and the inequality $b \le 5/4$. If the depth of $T$ is three, then $\Phi_k(T) \le (3/4) \Phi_{k-2}(T) \le (3/4) (3/2) |T| / b^{k-7} \le (3/2) |T| / b^{k-5}$ by Lemma~\ref{lem_r_potential_3}, the induction hypothesis, the linearity of the total potential, and the inequality $b \le (4/3)^2$.
    If the depth of $T$ is at most two, then $\Phi_k(T) \le (2/3) \Phi_{k-5}(T) \le (2/3) (3/2) |T| / b^{k-10} \le (3/2) |T| / b^{k-5}$ by Lemma~\ref{lem_r_potential_2}, the induction hypothesis, the linearity of the total potential, and $b \le (3/2)^{1/5}$.
\end{proof}

\begin{theorem}\label{thm_r}
    Algorithm \textsf{R} takes $O(\lg n)$ steps.
\end{theorem}
\begin{proof}
    By Lemma~\ref{lem_r_tree_potential}, if $k$ is such that $b^{k-5} > (3/2) n$, then no tree can be active in round $k$. Only the last round has no active trees.
\end{proof}

For the variants of \textsf{R} and \textsf{RA} that do two shortcuts in each round instead of just one, we can simplify the analysis and improve the constants.
For \textsf{RA}, we let the potential of an active tree be the maximum depth of an edge end (or zero if there are no edge ends) plus one.
The potential of an active tree drops from the previous round by at least a factor of two unless it has only one active constituent tree in the previous round and that tree has potential one.
The proof of Lemma~\ref{lem_potential_2} gives a potential reduction of at least a factor of two in at most three rounds in this case.
Lemma~\ref{lem_tree_potential} holds with $a$ replaced by $2^{1/3} = 1.2599+$.
For \textsf{R}, we let the potential of an active tree be its depth.
The potential of an active tree drops from the previous round by at least a factor of at least two unless it has only one active constituent tree in the previous round and that tree is flat.
The proof of Lemma~\ref{lem_r_potential_2} gives a potential reduction of at least a factor of two in at most three rounds, and Lemma~\ref{lem_r_tree_potential} holds with $b$ replaced by $2^{1/3}$, the same constant as for the two-shortcut variant of \textsf{RA}.

This analysis suggests that doing two shortcuts per round rather than one might improve the practical performance of \textsf{R} and \textsf{RA}, especially since a shortcut needs only $n$ processes, but a connect step needs $m$.
Exactly how many shortcuts to do per round is a question for experiments to resolve.  Our analysis of algorithm \textsf{S} suggests that doing a non-constant number of shortcuts per round is likely to degrade performance.

\section{Related Work}\label{rw_section}

In this section we review previous work related to ours. We have presented our
results first, since they provide insights into the related work.
As far as we can tell, all our algorithms are novel and simpler than previous algorithms, although they are based on some of the previous algorithms.

Two different communities have worked on concurrent connected components algorithms, in two overlapping eras.
First, theoretical computer scientists developed provably efficient algorithms for various versions of the PRAM model.
This work began in the late 1970's and reached a natural conclusion in the work of Halperin and Zwick \cite{DBLP:journals/jcss/HalperinZ96, halperin2001optimal}, who gave $O(\lg n)$-step, $O(m)$-work randomized algorithms for the EREW (exclusive read, exclusive write) PRAM.
Their second algorithm finds spanning trees of the components.
The EREW PRAM is the weakest variant of the PRAM model, and finding connected components in this model requires $\Omega(\lg n)$ steps \cite{DBLP:journals/siamcomp/CookDR86}.
To solve the problem sequentially takes $O(m)$ time, so the Halperin-Zwick algorithms minimize both the number of steps and the total work (number of steps times the number of processes).
Whether there is a deterministic EREW PRAM algorithm with the same efficiency remains an open problem.

Halperin and Zwick's paper \cite{halperin2001optimal} contains a table listing results preceding theirs, and we refer the reader to their paper for these results.
Our interest is in simple algorithms for a more powerful computational model, so we content ourselves here with discussing simple labeling algorithms related to ours.  (The Halperin-Zwick algorithms and many of the preceding ones are \emph{not} simple.)
First we review variants of the PRAM model and how they relate to our algorithmic framework.

The three main variants of the PRAM model, in increasing order of strength, are EREW, CREW (concurrent read, exclusive write), and CRCW (concurrent read, concurrent write).
The CRCW PRAM has four standard versions that differ in how they handle write conflicts:
(\romannumeral1) COMMON: all writes to the same location at the same time must be of the same value;
(\romannumeral2) ARBITRARY: among concurrent writes to the same location, an arbitrary one succeeds;
(\romannumeral3) PRIORITY: among concurrent writes to the same location, the one done by the highest-priority process succeeds;
(\romannumeral4) COMBINING: values written concurrently to a given location are combined using some symmetric function.
As discussed in \S{\ref{af_section}},
our algorithms can be implemented on a COMBINING CRCW PRAM, with minimization as the combining function.

An early and important theoretical result is the $O(\lg^2 n)$-step CREW PRAM algorithm of Hirschberg, Chandra, and Sarwate \cite{DBLP:journals/cacm/HirschbergCS79}.
Algorithm \textsf{S} is a simplification of
their algorithm. They represent the graph by an adjacency matrix, but it is easy to translate their basic algorithm into our framework. Their algorithm alternates connect steps with repeated shortcuts.
To do connection they use a variant of \emph{parent-connect} that we call \emph{strong-parent-connect}.
It concurrently sets $x.p$ for each vertex $x$ equal to the minimum $w.p \neq x$ such that there is an edge $\{v, w\}$ with $v.p = x$;
if there is no such edge, $x.p$ does not change.
The following pseudocode implements this method in our framework: \\

\emph{strong-parent-connect}: \\
\indent\indent \textbf{for} each vertex $v$ \textbf{do} \\
\indent\indent\indent $v.n = \infty$ \\
\indent\indent \textbf{for} each edge $\{v, w\}$ \textbf{do} \\
\indent\indent\indent \textbf{if} $v.p \ne w.p$ \textbf{then} \\
\indent\indent\indent\indent $v.p.n = \min\{v.p.n, w.p\}$ \\
\indent\indent\indent\indent $w.p.n = \min\{w.p.n, v.p\}$ \\
\indent\indent\indent \textbf{for} each vertex $v$ \textbf{do} \\
\indent\indent\indent\indent \textbf{if} $v.n \ne \infty$ \textbf{then} \\
\indent\indent\indent\indent\indent $v.p = v.n$  \\

This version of connection can make a larger vertex the parent of a smaller one. Thus their algorithm does not do minimum labeling. Furthermore it can create parent cycles of length two, which Hirschberg et al. eliminate in a cleanup step at the end of each round. To do the cleanup it
suffices to concurrently set $v.p = v$ for each vertex such that $v.p > v$ and $v.p.p = v$.
Their algorithm is one of two we have found in the literature that can create parent cycles.

Although they do not say this, we think the reason Hirschberg et al. used \emph{strong-parent-connect} was to guarantee that each tree links with another tree in each round.
This gives them an $O(\lg n)$ bound on the number of rounds and an $O(\lg^2 n)$ bound on the number of steps, since there are $O(\lg n)$ shortcuts per round.
Our simpler algorithm \textsf{S} uses \emph{parent-connect} in place of \emph{strong-parent-connect}, making it a minimum labeling algorithm and eliminating the cleanup step.
Although \emph{parent-connect} does not guarantee that each tree links with another tree every round, it does guarantee such linking every two rounds, giving us the same $O(\lg^2 n)$ step bound as Hirschberg et al. See the proof of Theorem~\ref{thm_s_log}.


The first $O(\lg n)$-step PRAM algorithm was that of Shiloach and Vishkin \cite{DBLP:journals/jal/ShiloachV82}.
It runs on an ARBITRARY CRCW PRAM, as do the other algorithms we discuss, except as noted. The following is a version of their algorithm SV in our framework: \\

Algorithm SV: \\
\indent\indent\textbf{repeat} \\
\indent\indent\indent \{\emph{shortcut}; \emph{arb-parent-root-connect}; \emph{stagnant-parent-root-connect}; \emph{shortcut}\} \\
\indent\indent \textbf{until} no $v.p$ changes \\


\indent\indent \emph{arb-parent-root-connect}: \\
\indent\indent\indent \textbf{for} each vertex $v$ \textbf{do} \\
\indent\indent\indent\indent $v.o = v.p$ \\
\indent\indent\indent \textbf{for} each edge $\{v, w\}$ \textbf{do} \\
\indent\indent\indent\indent \textbf{if} $v.o > w.o$ and $v.o = v.o.o$ \textbf{then} \\
\indent\indent\indent\indent\indent $v.o.p = w.o$ \\
\indent\indent\indent\indent \textbf{else if} $w.o = w.o.o$ \textbf{then} \\
\indent\indent\indent\indent\indent $w.o.p = v.o$ \\


\indent\indent \emph{stagnant-parent-root-connect}: \\
\indent\indent\indent \textbf{for} each vertex $v$ \textbf{do} \\
\indent\indent\indent\indent $v.o = v.p$ \\
\indent\indent\indent \textbf{for} each edge $\{v, w\}$ \textbf{do} \\
\indent\indent\indent\indent \textbf{if} $v.o \ne w.o$ \textbf{then} \\
\indent\indent\indent\indent\indent \textbf{if} $v.o$ is a stagnant root \textbf{then} \\
\indent\indent\indent\indent\indent\indent $v.o.p = w.o$ \\
\indent\indent\indent\indent\indent \textbf{else if} $w.o$ is a stagnant root \textbf{then} \\
\indent\indent\indent\indent\indent\indent $w.o.p = v.o$ \\

Whereas \emph{parent-root-connect} updates the parent of each root $x$ to be the \emph{minimum} $y$ such that there is an edge $\{v, w\}$ with $x = v.o$ and $y = w.o < v.o$ if there is such an edge, \emph{arb-parent-root-connect} replaces the parent of each such root by an \emph{arbitrary} such $y$.
Arbitrary resolution of write conflicts suffices to implement the latter method, but not the former.

Shiloach and Vishkin define a root to be \emph{stagnant} if its tree is not changed by the first two steps of the main loop (the first shortcut and the \emph{arb-parent-root-connect}).
Their algorithm has additional steps to keep track of stagnant roots.
Method \emph{stagnant-parent-root-connect} updates the parent of each stagnant root $x$ to be an arbitrary $y$ such that there is an edge $\{v, w\}$ with $x = v.o$ and $y = w.o \neq v.o$ if there is such an edge.
The definition of ``stagnant'' implies that no two stagnant trees are connected by an edge.

Algorithm SV does not do minimum labeling, since \emph{stagnant-parent-root-connect} can make a larger vertex the parent of a smaller one.  Nevertheless, the algorithm creates no cycles, although the proof of this is not entirely straightforward.  The efficiency analysis is also not straightforward.

Algorithm \textsf{R} is algorithm SV with the third and fourth steps of the main loop deleted and the second step modified to resolve concurrent writes by minimum value instead of arbitrarily.
Shiloach and Vishkin state that one shortcut can be deleted from their algorithm without affecting its asymptotic efficiency.
They included the third step for two reasons:
(\romannumeral1) their analysis examines one round at a time, requiring that every tree change in every round, and
(\romannumeral2) if the third step is deleted, the algorithm can take $\Omega(n)$ steps on a graph that is a tree with edges $\{i, n\}$ for $i \in [n-1]$.
This example strongly suggests that to obtain a simpler algorithm one needs to use a more powerful model of computation, as we have done by using minimization to resolve write conflicts.

Awerbuch and Shiloach presented a slightly simpler $O(\lg n)$-step algorithm and gave a simpler efficiency analysis \cite{DBLP:journals/tc/AwerbuchS87}.
Our analysis of algorithms \textsf{R} and \textsf{RA} in \S{\ref{subsection_lsb}} uses a variant of their potential function.
Their algorithm is algorithm SV with the first shortcut deleted and the two connect steps modified to update only parents of roots of flat trees.  The computation needed to keep track of flat tree roots is simpler than that needed in algorithm SV to keep track of stagnant roots.

An even simpler but randomized $O(\lg n)$-step algorithm was proposed by Reif \cite{reif1984optimal}: \\

Algorithm Reif: \\
\indent\indent \textbf{repeat} \\
\indent\indent\indent \{for each vertex flip a coin; \emph{random-parent-connect}; \emph{shortcut}\} \\
\indent\indent \textbf{until} no $v.p$ changes \\


\indent\indent \emph{random-parent-connect}: \\
\indent\indent\indent \textbf{for} each vertex $v$ \textbf{do} \\
\indent\indent\indent\indent $v.o = v.p$ \\
\indent\indent\indent \textbf{for} each edge $\{v, w\}$ \textbf{do} \\
\indent\indent\indent\indent \textbf{if} $v.o$ flipped heads and $w.o$ flipped tails \textbf{then} \\
\indent\indent\indent\indent\indent $v.o.p = w.o$ \\
\indent\indent\indent\indent \textbf{else if} $w.o$ flipped heads and $v.o$ flipped tails \textbf{then} \\
\indent\indent\indent\indent\indent $w.o.p = v.o$ \\

Reif's algorithm keeps the trees flat, making the algorithm monotone, although it does not do minimum labeling.
Although it is randomized, Reif's algorithm is simpler than those of Shiloach and Vishkin, but \textsf{R} and \textsf{RA} are even simpler and are deterministic.

We know of one algorithm other than that of Hirschberg et al. \cite{DBLP:journals/cacm/HirschbergCS79} that does not maintain acyclicity.
This is the algorithm of Johnson and Metaxis \cite{DBLP:journals/jal/JohnsonM95}.
Their algorithm runs in $O((\lg n)^{3/2})$ steps on an EREW PRAM.
To eliminate any cycles it creates, it does a form of shortcutting.

Algorithms that run on a more restricted form of PRAM, or use fewer processes (and thereby do less work) use various kinds of edge alteration, edge addition, and edge deletion, along with techniques to resolve read and write conflicts. Such algorithms are much more complicated than those we have considered.
Again we refer the reader to \cite{DBLP:journals/jcss/HalperinZ96, halperin2001optimal} for results and references.


The second era of concurrent connected components algorithms was that of the experimentalists.  It began in the 1990's and continues to the present.  Experimentation has expanded greatly with the growing importance of huge graphs representing the internet, the world-wide web, friendship connections, and other symmetric relations, as well as the development of cloud computing frameworks.
These trends make concurrent algorithms for connected components both practical and useful.  The general approach of the experimentalists has been to take one or more existing algorithms, possibly simplify or modify them, implement the resulting suite of algorithms on one or more computing platforms, and report the results of experiments done on some collection of graphs.
Examples of such studies include
\cite{DBLP:conf/dimacs/GoddardKP94, DBLP:conf/spaa/Greiner94, hsu1997parallel, DBLP:journals/pvldb/YanCXLNB14, DBLP:conf/hotos/McSherryIM15, DBLP:conf/wsdm/StergiouRT18}.

Some of these papers make claims about the theoretical efficiency of algorithms they propose, but several of these claims are incorrect or unjustified.
We give some examples.
The first is a paper by Greiner \cite{DBLP:conf/spaa/Greiner94} in which he claims an $O(\lg^2 n)$ step bound for his ``hybrid'' algorithm.

Greiner's description of this algorithm is incomplete.
The algorithm is a modification of the algorithm of Hirschberg et al. \cite{DBLP:journals/cacm/HirschbergCS79}.
Each round does a form of direct connect followed by repeated shortcuts followed by an alteration. Since repeated shortcuts guarantee that all trees are flat at the beginning of each round, this is equivalent to using a version of \emph{parent-connect} and not doing alteration.
The main novelty in his algorithm is that alternate rounds use
maximization instead of minimization in the connect step. He does not specify exactly how the connect step works. There are at least two possibilities. One is to use \emph{direct-connect}, but in alternate rounds replace min by max. The resulting algorithm is a min-max version of algorithm \textsf{S}. The second is to use the following strong version of \emph{direct-connect}, but in alternate rounds replace min by max and $\infty$ by $-\infty$: \\

\emph{strong-direct-connect}: \\
\indent\indent \textbf{for} each vertex $v$ \textbf{do} \\
\indent\indent\indent $v.n = \infty$ \\
\indent\indent \textbf{for} each edge $\{v, w\}$ \textbf{do} \\
\indent\indent\indent $v.n = \min\{v.n, w\}$ \\
\indent\indent\indent $w.n = \min\{w.n, v\}$ \\
\indent\indent \textbf{for} each vertex $v$ \textbf{do} \\
\indent\indent\indent \textbf{if} $v.n \ne \infty$ \textbf{then} \\
\indent\indent\indent\indent $v.p = v$ \\

The resulting algorithm is a min-max version of the Hirschberg et al. algorithm. 
Greiner claims an $O(\lg n)$ bound on the number of rounds and an $O(\lg^2 n)$ bound on the number of steps. But these bounds do not hold for the algorithm that uses the min-max version of \emph{direct-connect}:
on the bad example of Shiloach and Vishkin consisting of an unrooted tree with vertex $n$ adjacent to vertices $1$ through $n-1$, the algorithm takes $\Omega(n)$ steps.
This example has a high-degree vertex, but there is a simple example whose vertices are of degree at most three, consisting of a path of odd vertices $1, 3, 5, \dots, n$ with each even vertex $i$ adjacent to $i + 1$.
On the other hand, the algorithm that uses the min-max version of \emph{strong-direct-connect} can create parent cycles of length two, which must be eliminated by a cleanup as in the Hirschberg et al. algorithm.
Greiner says nothing about eliminating cycles. We conclude that either his step bound is incorrect or his algorithm is incorrect.

At least one other work reproduces Greiner's error: Soman et al. \cite{DBLP:conf/ipps/SomanKN10, DBLP:journals/ppl/SomanKN10} propose a modification of Greiner's algorithm intended for implementation on a GPU model. Their algorithm is the inefficient version of Greiner's algorithm, modified to use \emph{parent-connect} instead of \emph{direct-connect} and without alteration. Their specific implementation of the connect step is as follows: \\

\emph{alternate-connect}: \\
\indent\indent \textbf{for} each edge $\{v, w\}$ \textbf{do} \\
\indent\indent\indent \textbf{if} $v.p \ne w.p$ \textbf{then} \\
\indent\indent\indent\indent $x = \min\{v.p, w.p\}$ \\
\indent\indent\indent\indent $y = \max\{v.p, w.p\}$ \\
\indent\indent\indent\indent \textbf{if} round is even \textbf{then} \\
\indent\indent\indent\indent\indent $y.p = x$ \\
\indent\indent\indent\indent \textbf{else} $x.p = y$ \\

Soman et al. say nothing about how to resolve concurrent writes. If this resolution is arbitrary, or by minimum in the even rounds and by maximum in the odd rounds, then the algorithm takes $\Omega(n)$ steps on the examples mentioned above.

Algorithm \textsf{S}, and the equivalent algorithm that uses \emph{direct-connect} and alteration, are simpler than the algorithms of Greiner and Soman et al. and have guaranteed $O(\lg^2 n)$ step bounds.
We conclude that alternating minimization and maximization adds complication without improving efficiency, at least in theory.

Another paper that has an invalid efficiency bound as a result of not handling concurrent writes carefully is that of Yan et al. \cite{DBLP:journals/pvldb/YanCXLNB14}.
They consider algorithms in the PREGEL framework \cite{DBLP:conf/sigmod/MalewiczABDHLC10}, which is a graph-processing platform designed on top of the MPC model.
All the algorithms they consider can be expressed in our framework.
They give an algorithm obtained from algorithm SV by deleting the first shortcut and replacing the second connect step by the first connect step of Awerbuch and Shiloach's algorithm.  In fact, the second connect step does nothing, since any parent update it would do has already been done by the first connect step.
That is, this algorithm is equivalent to algorithm SV with the first shortcut and the second connect step deleted.  Their termination condition, that all trees are flat, is incorrect, since there could be two or more flat trees in the same component.
They claim an $O(\lg n)$ bound on steps, but since they assume arbitrary resolution of write conflicts, the actual step bound is $\Theta(n)$ by the example of Shiloach and Vishkin.

A third paper with an analysis gap is that of Stergio, Rughwani, and Tsioutsiouliklis \cite{DBLP:conf/wsdm/StergiouRT18}.
They present an algorithm that we call SRT, whose main loop expressed in our framework is the following: \\

Algorithm SRT: \\
\indent\indent \textbf{repeat} \\
\indent\indent\indent \textbf{for} each vertex $v$ \textbf{do} \\
\indent\indent\indent\indent $v.o = v.p$ \\
\indent\indent\indent\indent $v.n = v.p$ \\
\indent\indent\indent \textbf{for} each edge $\{v, w\}$ \textbf{do} \\
\indent\indent\indent\indent \textbf{if} $v.o > w.o$ \textbf{then} \\
\indent\indent\indent\indent\indent $v.n = \min\{v.n, w.o\}$ \\
\indent\indent\indent\indent \textbf{else} $w.n = \min\{w.n, v.o\}$ \\
\indent\indent\indent \textbf{for} each vertex $v$ \textbf{do} \\
\indent\indent\indent\indent $v.o.p = \min\{v.o.p, v.n\}$ \\
\indent\indent\indent \textbf{for} each vertex $v$ \textbf{do} \\
\indent\indent\indent\indent $v.p = \min\{v.p, v.n.o\}$ \\
\indent\indent \textbf{until} no $v.p$ changes \\

This algorithm does an extended form of connection combined with a variant of shortcutting that combines old and new parents.
It is not monotone.
Stergio et al. implemented this algorithm on the Hronos computing platform and successfully solved problems with trillions of edges.
They claimed an $O(\lg n)$ step bound for the algorithm, but we are unable to make sense of their analysis.
Their paper motivated our work.  We are so far unable to prove any interesting step bound for algorithm SRT, but using the techniques of \S{\ref{subsec_d_bound}} we can prove an $O(\lg^2 n)$ step bound for the following variant: \\

Algorithm \textsf{H}: \\
\indent\indent \textbf{repeat} \\
\indent\indent\indent \textbf{for} each vertex $v$ \textbf{do} \\
\indent\indent\indent\indent $v.o = v.p$ \\
\indent\indent\indent\indent $v.n = v.p$ \\
\indent\indent\indent \textbf{for} each edge $\{v, w\}$ \textbf{do} \\
\indent\indent\indent\indent \textbf{if} $v.o > w.o$ \textbf{then} \\
\indent\indent\indent\indent\indent $v.o.n = \min\{v.o.n, w.o\}$ \\
\indent\indent\indent\indent\indent \textbf{else} $w.o.n = \min\{w.o.n, v.o\}$; \\
\indent\indent\indent\indent \textbf{for} each vertex $v$ \textbf{do} \\
\indent\indent\indent\indent\indent $v.p = \min\{v.p, v.n, v.o.n, v.n.n\}$ \\
\indent\indent \textbf{until} no $v.p$ changes \\

Algorithm \textsf{H} is algorithm \textsf{P} with the shortcut step replaced by a hybrid shortcutting step in which both the current parent of the current parent and the current parent of the old parent are candidates to be the new parent.  We omit the analysis of algorithm \textsf{H}.

A final paper with an interesting algorithm but incorrect analysis is that of Burkhardt \cite{DBLP:journals/corr/absPB}.
The main novelty in Burkhardt's algorithm is to replace each edge $\{v, w\}$ by a pair of oppositely directed arcs $(v, w)$ and $(w, v)$ and to use \emph{asymmetric} alteration:
he replaces $(v, w)$ by $(w, v.p)$ instead of $(v.p, w.p)$ (unless $w = v.p$).
This idea allows him to combine connecting and shortcutting in a natural way.  (Burkhardt claims that his algorithm does not do shortcutting, but it does, implicitly.)
Burkhardt does not give an explicit stopping rule, saying only, ``This is repeated until all labels converge.''
An iteration can alter arcs without changing any parents, so one must specify the stopping rule carefully.
The following is a version of the main loop of Burkhardt's algorithm with the parent updates and the arc alterations disentangled, and which stops when there is one root and all other vertices are leaves: \\

Algorithm \textsf{B}: \\
\indent\indent \textbf{repeat} \\
\indent\indent\indent \textbf{for} each arc $(v, w)$ \textbf{do} \\
\indent\indent\indent\indent \textbf{if} $v > w$ \textbf{then} \\
\indent\indent\indent\indent\indent $v.p = \min\{v.p, w\}$ \\
\indent\indent\indent \textbf{for} each arc $(v, w)$ \textbf{do} \\
\indent\indent\indent\indent \textbf{if} $v.p \neq w$ \textbf{then} \\
\indent\indent\indent\indent\indent replace $(v, w)$ by $(w, v.p)$ \\
\indent\indent\indent\indent \textbf{else} delete $(v, w)$ \\
\indent\indent\indent \textbf{for} each vertex $v$ \textbf{do} \\
\indent\indent\indent\indent \textbf{if} $v.p \neq v$ \textbf{then} \\
\indent\indent\indent\indent\indent add arc $(v.p, v)$\\
\indent\indent \textbf{until} every arc $(v, w)$ has $v.p = w.p$ and $v.p \in \{v, w\}$ \\

Burkhardt claims that his algorithm takes $O(\lg d)$ steps, which would be remarkable if true.
Unfortunately, a long skinny grid is a counterexample, as shown in \cite{DBLP:conf/focs/Andoni}.
Burkhardt also claimed that the number of arcs existing at any given time is at most $2m + n$.
The version above has a $2m$ upper bound on the number of arcs.
Two small changes in the algorithm reduce the upper bound on the number of arcs to $m$ and make the shortcutting more efficient:
replace each original edge $\{v, w\}$ by \emph{one} arc $(\max\{v, w\}, \min\{v, w\})$, and in the loop over the vertices replace ``add arc $(v.p, v)$'' by ``add arc $(v, v.p.p)$.''
We call the resulting algorithm \textsf{AA}, for \emph{asymmetric alteration}. The following pseudocode implements this algorithm: \\

Algorithm \textsf{AA}: \\
    \indent\indent \textbf{for} each vertex $v$ \textbf{do} $v.p = v$ \\
    \indent\indent \textbf{for} each edge $\{v, w\}$ \textbf{do} \\
    \indent\indent\indent replace $\{v,w\}$ by arc $(\max\{v, w\}, \min\{v, w\})$ \\
    \indent\indent \textbf{repeat} \\
    \indent\indent\indent \textbf{for} each arc $(v, w)$ \textbf{do} \\
    \indent\indent\indent\indent $v.p = \min\{v.p, w\}$ \\
    \indent\indent\indent \textbf{for} each arc $(v, w)$ \textbf{do} \\
    \indent\indent\indent\indent delete $(v, w)$ \\
    \indent\indent\indent\indent \textbf{if} $w \ne v.p$ \textbf{then} \\
    \indent\indent\indent\indent\indent add arc $(w, v.p)$ \\
    \indent\indent\indent \textbf{for} each vertex $v$ \textbf{do} \\
    \indent\indent\indent\indent \textbf{if} $v \ne v.p$ \textbf{then} \\
    \indent\indent\indent\indent\indent add arc $(v, w.p)$ \\
    \indent\indent \textbf{until} no arc $(v, w)$ has $w \ne v.p$ \\

A version of Algorithm \textsf{AA} was proposed to us by Yu-Pei Duo [private communication, 2018].
The techniques of \S{\ref{subsec_d_bound}} extend to give an $O(\lg^2 n)$ step bound for \textsf{AA}, \textsf{B}, and Burkhardt's original algorithm.
We omit the details.

Very recently, theoreticians have become interested in concurrent algorithms for connected components again, with the aim of obtaining a step bound logarithmic in $d$ rather than $n$, for a suitably powerful model of computation.
The first, breakthrough result in this direction was that of Andoni et al. \cite{DBLP:conf/focs/Andoni}.
They gave a randomized algorithm that takes $O(\lg d \lg \log_{m/n} n)$ steps in the MPC model.
Their algorithm uses graph densification based on the distance-doubling technique of \cite{DBLP:conf/icde/RastogiMCS13}, controlled to keep the number of edges linearly bounded.
Behnezhad et al. \cite{DBLP:journals/corr/abs-1910-05385} improved the result of Andoni et al. by reducing the number of steps to $O(\lg d + \lg\log_{m/n} n)$.  Their algorithm can be implemented in the MPC model or on a very powerful version of the CRCW PRAM that supports a ``multiprefix'' operation.
In a recent work \cite{DBLP:journals/corr/LTZ20} we show that this algorithm and that of Andoni et al. can be simplified and implemented on an ARBITRARY CRCW PRAM.

\section{Remarks}\label{remark_section}

We have presented several very simple label-update algorithms to compute connected components concurrently.
Our best bounds, of $O(\lg n)$ steps and $O(m \lg n)$ work, are for two related monotone algorithms, \textsf{R} and \textsf{RA}.
For two other algorithms, \textsf{A}, which is non-monotone, and \textsf{S}, which keeps all trees flat by doing repeated shortcuts, our bounds are $O(\lg^2 n)$ steps and $O(m \lg^2 n)$ work, which are tight for \textsf{S} but maybe not for \textsf{A}.
We have also pointed out errors in previous analyses of similar algorithms.

Our analysis of these algorithms is novel in that it extends over several rounds of the main loop, unlike previous analyses that consider only one round at a time.
Our analysis of \textsf{A} combines new ideas with an idea from the analysis of disjoint set union algorithms.
As mentioned in \S{\ref{rw_section}}, this analysis extends to give analyses of two other algorithms in the literature, and variants of these algorithms.

Our results illustrate the subtleties of even simple algorithms.
A number of theoretical questions remain open, notably, determining tight asymptotic step bounds for algorithms \textsf{P}, \textsf{A}, \textsf{H}, SRT, \textsf{AA}, and \textsf{B}.
For \textsf{P} and SRT we know nothing interesting:
our techniques seem too weak to derive a
poly-logarithmic step bound for an algorithm such as \textsf{P} or SRT that is non-monotone and in which trees can be passive for an indefinite number of rounds.
For \textsf{A}, \textsf{H}, \textsf{AA}, and \textsf{B} we have a bound of $O(\lg^2 n)$ steps but the lower bound is $\Omega(\lg n)$.

All our algorithms are simple enough to merit experimental study.
We leave this for future work.
Interesting questions to address are whether monotonicity helps in practice and how many shortcuts should be done per round.

There is a natural way to convert each of the deterministic algorithms we have considered into a randomized algorithm: number the vertices from $1$ to $n$ uniformly at random and identify the vertices by number.
In an application, one may get such randomization for free, for example if a hash table stores the vertex identifiers. It is natural to study the efficiency that results from such randomization.
We are doing this, in ongoing work with Eitan Zlatin. We can prove a high-probability $O(\lg n)$ or $O(\lg^2 n)$ step bound for the randomized version of most of the algorithms we have presented, and in particular an $O(\lg n)$ bound for algorithm \textsf{A}, improving our $O(\lg^2 n)$ worst-case bound. We shall report on these results in the future.

We have assumed global synchronization.
The problem becomes much more challenging in an asynchronous setting.
One of the authors and a colleague have studied asynchronous concurrent disjoint set union \cite{DBLP:conf/podc/JayantiT16}, the incremental version of the connected components problem.
Their algorithms can be used to find connected components asynchronously.

An interesting extension of the connected components problem is to construct a spanning tree of each component.  It is easy to extend algorithms \textsf{R}, \textsf{RA}, and \textsf{S} to do this:
when an edge causes a root to become a child, add the corresponding original edge to the spanning forest.
Extending non-monotone algorithms such as \textsf{A} to construct spanning trees seems a much bigger challenge.

\paragraph{Acknowledgements and Correction.}
We thank Dipen Rughwani, Kostas Tsioutsiouliklis, and Yunhong Zhou for telling us about \cite{DBLP:conf/wsdm/StergiouRT18}, for extensive discussions about the problem and our algorithms, and for insightful comments on our early results.

In the preliminary version of this work \cite{liu_tarjan}, we claimed an $O(\lg^2 n)$ step bound for algorithm \textsf{P} and for a related algorithm \textsf{E} (which we have omitted from the current paper).
We thank Pei-Duo Yu for discovering that Lemma 6 in \cite{liu_tarjan} does not hold for \textsf{P} and \textsf{E}, invalidating our proof of Theorem 13 in \cite{liu_tarjan} for these algorithms. 


\bibliographystyle{alpha}
\bibliography{CACC}

\end{document}